\title{Fault-Tolerant Distance Labeling for Planar Graphs}
\author[1]{Aviv Bar-Natan}
\author[2]{Panagiotis Charalampopoulos}
\author[3]{Pawe\l{} Gawrychowski}
\author[2]{Shay Mozes}
\author[1]{Oren Weimann}
\affil[1]{
University of Haifa, Israel\\
\href{mailto:1aviv234@gmail.com}{1aviv234@gmail.com}, \href{mailto:oren@cs.haifa.ac.il}{oren@cs.haifa.ac.il}}
\affil[2]{
The Interdisciplinary Center Herzliya, Israel\\
\href{mailto:panagiotis.charalampopoulos@post.idc.ac.il}{panagiotis.charalampopoulos@post.idc.ac.il}, \href{mailto:smozes@idc.ac.il@gmail.com}{smozes@idc.ac.il}}
\affil[3]{
University of Wroc\l{}aw, Poland\\
\href{mailto:gawry@cs.uni.wroc.pl}{gawry@cs.uni.wroc.pl}}
\date{\vspace{-5ex}}
\algnewcommand{\LineComment}[1]{\State \(\triangleright\) #1}
\definecolor{darkblue}{rgb}{0,0.08,0.45}
\newtheorem{theorem}{Theorem}
\newtheorem{lemma}[theorem]{Lemma}
\newtheorem{conjecture}[theorem]{Conjecture}
\theoremstyle{remark}
\newtheorem*{remark}{Remark}
\newcommand{\cO}{O}
\newcommand{\TG}{\mathcal{T}}
\newcommand{\cOtilde}{\tilde{O}}
\begin{document}

\maketitle

\begin{abstract}
\normalsize
In fault-tolerant distance labeling we wish to assign short labels to the vertices of a graph $G$ such that from the labels of any three vertices $u,v,f$ we can infer the $u$-to-$v$ distance in the graph $G\setminus \{f\}$. We show that any directed weighted planar graph (and in fact any graph in a graph family with $\cO(\sqrt{n})$-size separators, such as minor-free graphs) admits fault-tolerant distance labels of size $\cO(n^{2/3})$. 
We extend these labels in a way that allows us to also count the number of shortest paths, and provide additional upper and lower bounds for labels and oracles for counting shortest paths. \end{abstract}

\section{Introduction}
Computing distances in graphs is one of the most basic and important problems in graphs theory, both from theoretical and practical points of view. In this work we consider distance labeling schemes, in which one preprocesses a network to assign labels to the vertices, so that the distance between any two vertices $u$ and $v$ can be recovered from just the labels of $u$ and $v$ (and no other information). The main criteria of interest are foremost the size of the label, and to a lesser extent the time it takes to recover the distance from a given pair of labels (query time). Distance labeling schemes are useful in the distributed setting, where it is advantageous to be able to infer distances based only on local information such as the labels of the source and destination. This is the case in communication networks or in disaster stricken areas, where communication with a centralized entity is infeasible or downright impossible.

Considering the latter scenario of disaster management, it is not only likely that a disastrous event makes communication with a centralized entity impossible, but also that parts of the network are affected by the disaster, and that only shortest paths that avoid affected parts of the network should be considered when computing distances. Forbidden-set distance labeling schemes assign labels to vertices, so that, for any pair of vertices $u$ and $v$, and any set $F$ of failed vertices, the length of a shortest $u$-to-$v$ path that avoids all vertices in $F$ can be recovered just from the labels of $u$, $v$, and of the vertices in $F$. In this work we study forbidden-set distance labeling schemes in directed planar networks. We also study the extension of such schemes to capture not only the distance from $u$ to $v$, but also the number of distinct $u$-to-$v$ shortest paths.

For unweighted (i.e., unit-weight) graphs, we measure the label size in bits. For weighted graphs and queries concerning lengths
of the shortest paths, we assume that the distance between any two nodes fits in a single machine words, and measure the label size
in words. For queries concerning the number of shortest paths, unless mentioned otherwise, we assume that the number of shortest
paths between any two nodes fits in a single machine word, and measure the label size in words.

\subsection{Related work}

Labeling schemes provide a clean and natural model for studying how to distribute information about a graph. Problems
considered in this model include adjacency~\cite{Kannan,alstrup2015optimal,petersen2015near,alstrup2015adjacency,AlonN17,bonichon2007short},
flows and connectivity~\cite{KatzKKP04,HsuL09,Korman10}, and Steiner tree~\cite{Peleg05}. See~\cite{rotbart2016new} for a recent survey.
We specifically focus on distance labeling schemes.

\paragraph{Distance labeling schemes.}
Embedding distance information into labels was studied by Graham and Pollack~\cite{GrahamP72} in the 1970's in what was termed the squashed cube model. In 2000, Peleg~\cite{Peleg00} formalized the notion of distance labeling schemes, and provided schemes with polylogarithmic label size (number of bits) and query time for trees, interval graphs and permutation graphs. Gavoille et al.~\cite{GPPR04} showed that for general graphs, the label size is $\Theta(n)$, and for trees, $\Theta(\log^2n)$. For (unit-weight) planar graphs they showed a lower bound of $\Omega(n^{1/3})$, and an upper bound of $O(\sqrt n \log n)$ bits. The upper bound was recently improved to $O(\sqrt n)$~\cite{GawrychowskiU16}, but the rare \emph{polynomial} gap between the lower and upper bound remains an interesting and important open problem.
For weighted planar graphs Gavoille et al. gave tight (up to polylogarithmic factors) $\tilde{\Theta}(n^{1/2})$ upper and lower bounds.

\paragraph{Approximate distance labeling schemes.}
Since exact distance labels typically require polynomial size labels~\cite{GPPR04}, researchers have sought smaller labels that yield approximate distances.
Gavoille et al.~\cite{GavoilleKKPP01} studied such labels for general graphs and various graph families.
Specifically, for planar graphs, they presented $O(n^{1/3} \log n)$-bit labels that provide a 3-approximation of the distance.
In the same year, Gupta et al.~\cite{GuptaKR01} presented smaller 3-approximate labels, requiring only $O(\log^2n)$ bits, and Thorup gave $(1+\epsilon)$-approximate labels of size $O(\log n /\epsilon)$, for any fixed $\epsilon>0$~\cite{Thorup04}.
The latter result was generalized to $H$-minor free graphs by Abraham and Gavoille in~\cite{DBLP:conf/podc/AbrahamG06}.

\paragraph{Forbidden-set distance labeling schemes.}
Forbidden-set labels were introduced in the context of routing labels by Feigenbaum et al.~\cite{FeigenbaumKMS05,FeigenbaumKMS07}, and studied by several others~\cite{TwiggPhD,Courcelle09,CourcelleT07,AbrahamCGP16,DBLP:conf/stoc/AbrahamCG12}. 
Exact forbidden-set labeling schemes of polylogarithmic size are given in~\cite{TwiggPhD,CourcelleT07} for graphs of bounded treewidth or cliquewidth. 
For unweighted graphs of bounded doubling dimension, forbidden-set labels with polylogarithmic size and $(1+\epsilon)$-stretch are also known~\cite{AbrahamCGP16}. 
For undirected planar graphs, and for any fixed $\epsilon>0$, Abraham et al.~\cite{DBLP:conf/stoc/AbrahamCG12} presented a forbidden-set labeling scheme of polylogarithmic size such that a
$(1+\epsilon)$-approximation of the shortest path between vertices $u$ and $v$ that avoids a set $F$ of failed vertices can be recovered from the labels of $u,v,$ and the labels of the failed vertices in $\cOtilde(|F|^2)$ time.\footnote{The $\tilde{\cO}(\cdot)$ notation suppresses $\log^{\cO(1)} n$ factors.}

\paragraph{Other related work.}
There are many other concepts related to distances in the presence of failures. 
In the replacement paths problem we are given a graph along with a source and sink vertices, and the goal is to efficiently compute all shortest paths between the source and the destinations for every possible single-edge failure in the graph. In planar graphs this problem can be solved in nearly linear time~\cite{DBLP:journals/talg/EmekPR10,DBLP:journals/talg/KleinMW10,DBLP:conf/soda/Wulff-Nilsen10}.
For the single source, single failure version of the problem (i.e.~when only the source vertex is fixed at construction time, and the query specifies just the target and a single failed vertex), Baswana et al.~\cite{DBLP:conf/soda/BaswanaLM12} presented an oracle with size and construction time $\cO(n \log^4 n)$ that answers queries in $\cO(\log^3 n)$ time.
Building upon this oracle, they then present an oracle of size $\cOtilde(n^2/q)$ supporting arbitrary distance queries subject to a single failure in time $\cOtilde(q)$ for any $q \in [1,n^{1/2}]$.
The authors of~\cite{CharalampopoulosMozesTebeka} show how to construct in $\cOtilde(n)$ time an oracle of size $\cOtilde(n)$ that, given a source vertex $u$, a target vertex $v$, and a set $F$ of $k$ faulty vertices, reports the length of a shortest $u$-to-$v$ path in $G \setminus F$ in  $\cOtilde(\sqrt{k n})$ time.  They further show that for any $r \in [1,n]$ there exists an $\cOtilde(\frac{n^{k+1}}{r^{k+1}} \sqrt{nr})$-size oracle that answers queries in time $\cOtilde(k\sqrt{r})$. Recently, Italiano et al.~\cite{Reachability} gave an oracle of size $\cO(n \log n)$ and construction time  $\cO(n \log^2 n/\log\log n)$ that supports reachability queries subject to a single failure in time $\cO(\log n)$.

Another related concept is that of dynamic distance oracles. Here a graph is preprocessed so as to efficiently support distance queries between arbitrary pairs of vertices as well as updates to the graph. Updates may include deletion of edges or vertices (decremental updates), or also addition of new edges and vertices (fully dynamic).
Fakcharoenphol and Rao~\cite{FR} presented distance oracles that require $\cOtilde(n^{2/3})$ and $\cOtilde(n^{4/5})$  amortized time per update and query for non-negative and arbitrary edge-weight updates respectively.\footnote{Though this is not mentioned in~\cite{FR}, the query time can be made worst case rather than amortized by standard techniques.} The space required by these oracles is $\cO(n \log n)$.
The extensions of this result in~\cite{MSSP,DBLP:conf/stoc/ItalianoNSW11,DBLP:journals/talg/KaplanMNS17,CharalampopoulosMozesTebeka} yield a dynamic oracle that can handle arbitrary edge weight updates, edge deletions and insertions (not violating the planarity of the embedding) and vertex deletions, as well as answer distance queries, in $\cOtilde(n^{2/3})$ time each.

\paragraph{Counting shortest paths.}
In the (non-faulty) counting version of shortest paths labeling, given the labels of vertices $s$ and $t$ we wish to return the {\em number} of shortest $s$-to-$t$ paths in $G$ (i.e.~paths whose length is equal to $d(s,t)$). 
This problem (without faults) was recently studied in~\cite{ISAAC} where labels\footnote{In~\cite{ISAAC}, the authors actually considered the oracle version of the problem, but their solution can be easily applied for labeling as well.} of size $\Theta(\sqrt{n})$ were constructed
under the assumption that the number of shortest paths between any two nodes fits in a constant number of machine words.
In the general case where the numbers consist of $L$ bits, the obtained labels consist of $O(\sqrt{n}\cdot L)$ bits.
As already observed in~\cite{ISAAC}, it is easy to construct an unweighted graph where $L=n-1$ making the labels consist of $\Theta(n^{1.5})$
bits, that is, more than in a naive encoding storing the whole graph in every label. However, the following simple construction shows
that we cannot hope to construct labels consisting of $o(n)$ bits without bounding $L$: given $n$ bits $b_{0},\ldots,b_{n-1}$ we construct
a graph consisting of a path $s=u_{0}-u_{1}-\cdots - u_{n-1}$ and another path $v_{1}-v_{2}-\cdots-v_{n}=t$ in which every edge is duplicated (i.e., there are two parallel edges between each pair $v_i,v_{i+1}$).
Finally, for every $i=0,\ldots,n-1$ such that $b_{i}=1$, we add an edge $u_{i}-v_{i+1}$. Then the number of shortest $s$-to-$t$ paths is exactly $\sum_{i=0}^{n-1} b_{i} \cdot 2^{n-1-i}$, and so by an encoding argument the total number of bits in the labels of $s$ and $t$ must be at least $n$.
Therefore, when counting shortest paths we will measure the size of a label in the number of machine words, each long enough to store the
number of shortest paths between any two nodes in the graph.

We highlight one interesting application where our scheme for counting shortest $s$-to-$t$ paths that avoid nodes $v_{1},v_{2},\ldots,v_{k}$
can be modified to obtain a better bound on the sizes of the labels in bits. Say that instead of counting such shortest paths we would like to
check if avoiding nodes $v_{1},v_{2},\ldots,v_{k}$ increases the length of the shortest path. In such case, we only need to check if the number
of shortest $s$-to-$t$ paths that avoid nodes $v_{1},v_{2},\ldots,v_{k}$ is nonzero. Because the number of shortest paths is always at most
$2^{n}$, by well known properties of prime numbers, choosing a random prime $p$ consisting of $\Theta(k\cdot \log n)$ bits guarantees that with
high probability, for every $s,t,v_{1},v_{2},\ldots,v_{k}$, the number of shortest paths counted modulo $p$ is nonzero if and only if the number
of shortest paths is nonzero. Our scheme (as well as the scheme of~\cite{ISAAC}) can be used for counting modulo $p$, so we obtain
labels consisting of $\tilde{O}(\sqrt{n}\cdot k)$ bits for such queries.

\subsection{Our results}

\begin{itemize}
\item In Section~\ref{sec:exact} we present a single-fault distance labeling scheme (forbidden-set labeling scheme for a set of cardinality 1). The label size is $\cO(n^{2/3})$, the query time is $\cOtilde(\sqrt{n})$, and time to construct all labels is $\cOtilde(n^{5/3})$. Our labeling scheme extends (with no overhead in the label size) to a labeling scheme for counting shortest paths (with a single fault). 

\item In Section~\ref{Labeling for Counting Shortest Paths} we extend the counting labels of~\cite{ISAAC} to the following fault-tolerant variant. Given the labels of vertices $s,t,v_1,v_2,\ldots,v_k$, we wish to return the number of $s$-to-$t$ paths that avoid vertices $v_1,\ldots,v_k$ and whose length is equal to $d(s,t)$ (the original $s$-to-$t$ distance in $G$). 
We show that the labeling of~\cite{ISAAC} (with labels of size $\tilde{O}(\sqrt{n})$) actually works in this more general setting. A naive query to such labeling takes $\tilde{O}(\sqrt{n} \cdot k^2)$ time, we show how to improve this to $\tilde{O}(\sqrt{n} \cdot k)$.

\item In Section~\ref{Labeling Lower Bound for Counting Shortest Paths} we show a lower bound of 
$\Omega(\sqrt{nL})$ on the label-size (in bits) for counting shortest paths (without faults), in graphs in which the number of distinct shortest paths between any two nodes consists of at most $L$ bits.

\item In Section~\ref{A Lower Bound on Dynamic Oracles for Counting Shortest Paths} we show a lower bound on dynamic oracles for counting shortest paths, conditioned on the hardness of online boolean matrix-vector multiplication. 
We prove that for any dynamic shortest paths counting oracle in undirected planar graphs, either the query time or the update time must be $\Omega(\sqrt{n})$ (up to subpolynomial factors).
\end{itemize}

We focus on planar graphs but in fact all our results (except for the efficient preprocessing time and query time in Section~\ref{sec:exact}) hold for any graph family with $\cO(\sqrt{n})$-size separators (such as $H$-minor free graphs and bounded genus graphs). This is also the case for the standard (i.e.~without failures) labeling scheme of Gavoille et al.~\cite{GPPR04}. However, while their $\tilde O(n^{1/2})$-size labels are obtained with a straightforward application of separators, our $\cO(n^{2/3})$-size (fault-tolerant) labels  are obtained with a non-standard and intricate use of separators.    

A main open question that is left unanswered by our work is the existence of non-trivial forbidden-set distance labels tolerating  more than a single fault. 
Labels for approximate distances~\cite{DBLP:conf/stoc/AbrahamCG12} also rely on separators, and do handle multiple failures. 
In the failure-free case,
the labels of~\cite{DBLP:conf/stoc/AbrahamCG12} consist of distances to a small (logarithmic) sample of vertices on some separators, called connections. 
To handle failures, the label of each vertex $u$ also stores the failure-free labels of the connections of $u$. 
This only increases the label-size by a polylogarithmic factor. 
In case of exact distances, the size of the failure-free labels is $\Omega(\sqrt n)$, so this approach seems unsuitable.

Another natural open question is whether the gap between our $\cO(n^{2/3})$-size fault-tolerant labels and the $\tilde O(n^{1/2})$-size labels without failures is actually required and tight. We observe that the existing lower bound technique of Gavoille et al. cannot be extended to show a lower bound above $\Omega (\sqrt n)$ for fault-tolerant labels. The reason is that their technique uses a global argument showing that if we wish to encode the distances between a subset $S$ of $k\le \sqrt n$ vertices then all their labels together require size $\Omega (k^2)$. However, even in the presence of (any number of) failures, encoding distances can be done with total size $\tilde O(k^2)$ (simply store for every $u,v \in S$ the length of the shortest $u$-to-$v$ path that is internally disjoint from $S$).    

\section{Preliminaries}\label{sec:prelim}

Throughout the paper we consider as input a weighted directed planar graph $G$, embedded in the plane.
We assume that the input graph has no negative length cycles.
We can transform the graph in a standard way, in $\cO(n \frac{\log^2 n}{\log \log n})$ time, so that all edge weights are non-negative and distances are preserved~\cite{DBLP:conf/esa/MozesW10}.

\paragraph{Separators and recursive decompositions.}

Miller~\cite{DBLP:conf/stoc/Miller84} showed how to compute a Jordan curve that intersects the graph at a set of nodes $Sep(G)$ of size $\cO(\sqrt{n})$ and separates $G$ into two pieces with at most $2n/3$ vertices each. Jordan curve separators can be used to recursively separate a planar graph until pieces have constant size.
The authors of~\cite{DBLP:conf/stoc/KleinMS13} show how to obtain a complete recursive decomposition tree $\TG$ of $G$ in $\cO(n)$ time. 
$\TG$ is a binary tree whose nodes correspond to subgraphs of $G$ (called {\em pieces}), with the root being all of $G$ and the leaves being pieces of constant size.
We identify each piece $P$ with the node representing it in $\TG$ (we can thus abuse notation and write $P\in \TG$),
 with its boundary $\partial P$ (i.e.~vertices that belong to some separator along the recursive decomposition used to obtain $P$), and with its separator $Sep(P)$.
We denote by $\TG[P,Q]$ the $P$-to-$Q$ path in $\TG$ (and also use $\TG(P,Q]$, $\TG[P,Q)$, and $\TG(P,Q)$). 
 
An \emph{$r$-division}~\cite{DBLP:journals/siamcomp/Frederickson87} of a planar graph, for $r \in [1,n]$, is a decomposition of the graph into $\cO(n/r)$ pieces, each of size $\cO(r)$, such that each piece $P$ has $\cO(\sqrt{r})$ boundary vertices (denoted  $\partial P$).
Another desired property of an $r$-division is that the boundary vertices lie on a constant number of faces  (called holes) of the piece.
For every $r$ larger than some constant, an $r$-division with few holes is represented in the decomposition tree $\TG$ of~\cite{DBLP:conf/stoc/KleinMS13}. It is convenient to describe the $r$-division by truncating $\TG$ at pieces of size $O(r)$, that also satisfy the other required properties. We refer to those pieces (the leaves of $\TG$ after truncation) as {\em regions} and denote by $R_u$ the region containing vertex $u$ (if $u$ belongs to multiple regions, we arbitrarily designate one of them as $R_u$).

\paragraph{Dense distance graphs and FR-Dijkstra.}
The \emph{dense distance graph} of a set of vertices $U$ that lie on a constant number of faces of a planar graph $H$, denoted 
$DDG_H(U)$ is a complete directed graph on the vertices of $U$.
Each edge $(u,v)$ has weight $d_{H}(u,v)$, equal to the length of the shortest $u$-to-$v$ path in $H$.
$DDG_H(U)$ can be computed in time $\cO((|U|^2 + |H|) \log |H|)$ using the multiple source shortest paths (MSSP) algorithm~\cite{MSSP,DBLP:journals/siamcomp/CabelloCE13}.
Thus, computing $DDG_P(\partial P)$ over all pieces of the recursive decomposition of $G$ requires time $\cO(n \log^2 n)$ and space $\cO(n \log n)$.
We next give a --convenient for our purposes-- interface for FR-Dijkstra~\cite{FR}, which is an efficient implementation of Dijkstra's algorithm on any union of $DDG$s.
The algorithm exploits the fact that, due to planarity, certain submatrices of the adjacency matrix of $DDG_H(U)$ satisfy the Monge property.
(A matrix $M$ satisfies the Monge property if, for all $i<i'$ and $j<j'$, $M_{i,j}+M_{i',j'} \leq M_{i',j}+M_{i,j'}$~\cite{monge1781memoire}.) The interface is specified in the following theorem, which was essentially proved in~\cite{FR}, with some additional components and details from~\cite{DBLP:journals/talg/KaplanMNS17,DBLP:conf/esa/MozesW10}.

\begin{theorem}[\cite{FR,DBLP:journals/talg/KaplanMNS17,DBLP:conf/esa/MozesW10}]\label{thm:FR}
Given a set $Y$ of $DDG$s, Dijkstra's algorithm can be run on the union of any subset of $Y$ with $\cO(N)$ vertices in total (with multiplicities) and an arbitrary set of $\cO(N)$ extra edges in time $\cO(N \log^2 N)$.
\end{theorem}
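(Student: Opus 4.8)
The plan is to show that the union of the chosen $DDG$s together with the extra edges — which may have $\Theta(N^2)$ edges — can be searched by a Dijkstra-like process in which the relaxation of the $\Theta(k^2)$ edges of a $DDG$ on $k$ vertices (the only part that is too expensive to carry out explicitly) is replaced by a constant number of \emph{implicit} operations on matrices having the Monge property. First I would decompose each $DDG_H(U)$ into $\cO(1)$ rectangular Monge matrices: since the vertices of $U$ lie on $\cO(1)$ faces of $H$, fixing a source face and a target face and cutting their cyclic orders at a point produces a matrix whose entries $d_H(u,v)$ satisfy the Monge property, and $\cO(1)$ further splits handle the wrap-around (the standard observation of~\cite{FR}). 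Hence it suffices to support Dijkstra given $\cO(N)$ Monge matrices whose dimensions sum to $\cO(N)$, together with the $\cO(N)$ extra edges; all weights are non-negative, as ensured by the reduction of Section~\ref{sec:prelim}.

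For a single Monge matrix $M$ with rows $u_1,\dots,u_p$ (sources) and columns $v_1,\dots,v_q$ (targets) in the order inherited from the faces, I would maintain a data structure supporting: (i) \emph{activating} a row $i$ with an additive offset $\delta_i$ (the final distance of $u_i$ once it leaves the priority queue); (ii) reporting the undeleted column $j$ minimizing $\min_{i\text{ active}}(\delta_i+M_{i,j})$ together with this value; and (iii) deleting a column. The key structural fact, which follows directly from the Monge inequality, is that for two active rows $i<i'$ the function $j\mapsto(\delta_i+M_{i,j})-(\delta_{i'}+M_{i',j})$ is monotone in $j$, so the columns on which row $i$ beats row $i'$ form a prefix; consequently the set of columns captured when a new row is activated is a contiguous interval, and the ``winning row'' is at all times a monotone step function of the column index. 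One can therefore store in a balanced search tree over the columns each column's current value and subtree minima, and process a row activation by locating the two endpoints of the captured interval and overwriting that contiguous block with the row's (Monge) values, maintaining the envelope of winning rows alongside, analogously to maintaining the lower envelope of a family of curves. Each of (i)--(iii) costs $\cO(\log^2 N)$ amortized, the second logarithm coming from the search for the interval endpoints against the stored envelope.

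Given these per-matrix structures I would run a global Dijkstra with one outer priority queue keyed by vertices: the tentative distance of a vertex $v$ is the minimum of the values reported by the matrices in which $v$ is a column and of the relaxed extra edges into $v$. Extracting the minimum vertex $v$ finalizes $d(v)$; we then delete column $v$ from every matrix containing it as a target, relax the extra edges out of $v$, and for every matrix containing $v$ as a source call \emph{activate} on its row with offset $d(v)$, re-reading that matrix's reported column minimum and updating the outer queue after each such operation. For the analysis, the hypothesis bounds the number of (vertex, $DDG$) incidences by $\cO(N)$, so there are $\cO(N)$ column deletions and $\cO(N)$ row activations at $\cO(\log^2 N)$ each, $\cO(N)$ extra-edge relaxations at $\cO(\log N)$ each, and $\cO(N)$ outer-queue operations at $\cO(\log N)$ each, for $\cO(N\log^2 N)$ in total; correctness is that of ordinary Dijkstra on the union, since the tentative distance of an extracted vertex always equals its true distance there.

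The main obstacle is the single-matrix data structure of the second paragraph: one must show that keeping the envelope of active rows consistent with the column-indexed search tree under arbitrarily interleaved activations and column deletions stays within the claimed bound, which requires a potential/amortization argument for the cost of rebuilding parts of the envelope. This is precisely FR-Dijkstra~\cite{FR}; we will invoke it as a black box, relying on~\cite{DBLP:journals/talg/KaplanMNS17} for a clean treatment of deletions and of the running-time bound, and on~\cite{DBLP:conf/esa/MozesW10} for the feature that an arbitrary set of $\cO(N)$ extra edges may be adjoined to the union (which is what later lets us splice in edges that are not part of any $DDG$), and we only re-derive the interface stated in the theorem.
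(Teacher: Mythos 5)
The paper offers no proof of Theorem~\ref{thm:FR}; it is stated purely as a black-box citation to~\cite{FR,DBLP:journals/talg/KaplanMNS17,DBLP:conf/esa/MozesW10}, and your proposal ultimately does the same, explicitly deferring the delicate potential/amortization argument for the per-matrix Monge-heap data structure to those references. The sketch you wrap around that black box---decomposing each $DDG$ into $\cO(1)$ staircase Monge matrices via the cyclic orders on the $\cO(1)$ holes, maintaining per-matrix column minima under row activations and column deletions using the single-crossing/lower-envelope structure, and the outer Dijkstra loop with $\cO(N)$ activations and deletions charged to the bounded number of vertex--$DDG$ incidences plus $\cO(N)$ extra-edge relaxations---is a faithful account of what those references establish, so the proposal is correct and consistent with the paper's treatment.
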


\section{Single-Fault Labeling for Reporting Shortest Paths}\label{sec:exact}

\paragraph{Warm-up.}
As a warm-up, we first sketch a simple labeling scheme that assigns a label of size $\cO(n^{4/5})$ to each vertex.
Consider an $r$-division for $r=n^{4/5}$, and let $\mathcal{R}$ be the set of its regions.
The label of each vertex $u$ consists of the following:
\begin{enumerate}[label=(\alph*)]
\item The $r$-division $\mathcal{R}$. Space: $O(n/r)$.
\item \label{warm2} For each region $R$ in the $r$-division, the length of the shortest path in $G$, among paths that are internally disjoint from $R$, from $u$ to $\bigcup_{P \in \mathcal{R}} \partial P$, and from $\bigcup_{P \in \mathcal{R}} \partial P$ to $u$.
There are $O(n/r)$ regions and for each of them we store $O(n/r \cdot \sqrt{r})$ distances. Space: $O(n^2/r^{3/2})$.
\item \label{warm3} The region $R_u$ and the $\partial R_u$-to-$\partial R_u$ distances in $G\setminus\{u\}$. Space: $O(r)$.
\end{enumerate}
The space is thus $\cO(n/r+n^2/r^{3/2}+r)=O(n^{4/5})$.

Let us now consider a query $(u,v,f)$, and assume, for simplicity, that no two of $u$, $v$ and $f$ are contained in a single region. We have two cases.
If there is a shortest $u$-to-$v$ path in $G \setminus \{ f \}$ that is vertex-disjoint from $R_f$, then the $u$-to-$\partial R_v$ distances among paths internally-disjoint from $R_f$ (item~\ref{warm2}), together with $R_v$, which is stored for $v$ (item~\ref{warm3}), allow us to retrieve the length of this path.
In the other case, we employ the $u$-to-$\partial R_f$ distances among paths internally-disjoint from $R_f$ (item~\ref{warm2}), the information stored in item~\ref{warm3} for $f$, and the $\partial R_f$-to-$v$ distances among paths internally-disjoint from $R_f$ (item~\ref{warm2}).

It is not difficult to combine this approach with the distance-labeling scheme of Gavoille et al.~\cite{GPPR04} for the failure-free setting to obtain labels of size $O(n^{3/4})$.
(Item~\ref{warm2} has to be modified to store distances to separators of ancestors of $R_u$ instead of distances to $\bigcup_{P \in \mathcal{R}} \partial P$, requiring $O(n^{3/2}/r)$ space.)
In the approach that we present below, we rely on separators in a more sophisticated and delicate manner to obtain labels of size $O(n^{2/3})$.

\paragraph{The label.}
Recall that an $r$-division is represented by a decomposition tree $\TG$, whose root corresponds to $G$. The internal nodes of $\TG$ correspond to pieces of $G$. The two children of a piece $P\in \TG$ are the subgraphs of $P$ external and internal to $Sep(P)$. The leaves of $\TG$ are the regions of the $r$-division.

The label of each vertex $u$ in $G$ consists of the following information:
\begin{enumerate}[label=(\roman*)]
\item \label{item1} The entire recursive decomposition tree $\TG$. Space: $O(n/r)$.
\item \label{item2} For each region $R$ in the $r$-division, the shortest path distances in $G$ from $u$ to $\partial R$ among paths that are internally disjoint from $R$. There $O(n/r)$ regions and each of them has $O(\sqrt{r})$ boundary nodes. Space: $O(n/\sqrt{r})$.
\item \label{item3} The region $R_u$ and the $\partial R_u$-to-$\partial R_u$ distances in $G\setminus\{u\}$. Space: $O(r)$.
\item \label{item4} For each piece $P \in \TG$ with sibling $Q$, for each $p\in \partial P \setminus Q$, the shortest path distance from $u$ to $p$ in $G \setminus (P \cup Q) \cup \{p\}$, and the shortest path distance from $p$ to $u$ in $G \setminus Q$. Space: $O(\sum_{P \in \TG} \partial P)=O(n/\sqrt{r})$, c.f.~\cite{vorexact}.
\item \label{item5} For each ancestor piece $P$ of $R_u$ in $\TG$, for each vertex $p$ of $Sep(P) \setminus \partial P$, the shortest path distance from $u$ to $p$ among paths in $P \setminus \partial P$ that are internally disjoint from $Sep(P)$, and the shortest path distance in $P \setminus \partial P$ from $p$ to $u$. Space: $O(\sqrt{n})$, c.f.~\cite{vorexact}.
\end{enumerate}
 
\noindent The overall space required by the above five items is $O(n/r + n/\sqrt{r} + r + n/\sqrt{r} + \sqrt{n})$, which is $O(n^{2/3})$ for $r = n^{2/3}$.

\paragraph{The query.}
Upon query $(u,v,f)$ we say that a path is a $(u,v,f)$-path if it is a $u$-to-$v$ path in $G$ that avoids $f$, and we seek the shortest $(u,v,f)$-path, which we denote by $S$. Let $X$ denote the lowest node in $\TG$ that is an ancestor of $R_f$ and of at least one of $\{R_u,R_v\}$. Let us assume without loss of generality that $X$ is an ancestor of $R_u$. We return the minimum of the following three:    
\begin{enumerate}
\item $S$ includes a vertex of $\partial R_f$. 

The length of this path is found with a SSSP computation on the (non-planar) graph $G_1$ whose vertices are $u,v$, and $\partial R_f\setminus \{f\}$ and whose edges are in one-to-one correspondence with the distances specified below, i.e.~for each $a$-to-$b$ distance, there is an edge from $a$ to $b$ with length equal to that distance:
\begin{itemize}
    \item the $u$-to-$\partial R_f\setminus \{f\}$ distances from item~\ref{item2} in $u$'s label (or the $u$-to-$\partial R_f\setminus \{f\}$ distances in $R_f\setminus \{f\}$, which can be computed from item~\ref{item3}, if $R_u=R_f$);
    \item the $\partial R_f\setminus \{f\}$-to-$\partial R_f\setminus \{f\}$ distances from item~\ref{item3} in $f$'s label;
    \item the $\partial R_f\setminus \{f\}$-to-$v$ distances from item~\ref{item2} in $v$'s label (or the $\partial R_f\setminus \{f\}$-to-$v$ distances in $R_f\setminus \{f\}$, which can be computed from item~\ref{item3}, if $R_v=R_f$).
\end{itemize} 

\item $S$ avoids $R_f$ but includes a boundary vertex of some piece on $\TG[X,R_f]$.

The length of this path is found with a SSSP computation on the graph $G_2$ whose vertices are $u,v$, and $\partial P$ of all nodes $P$ that are siblings of some node $Q$ on the $X$-to-$R_f$ path in $\TG$. 
The edges are in one-to-one correspondence with the $u$-to-$\partial P$ distances from item~\ref{item4} in $u$'s label and the $\partial P$-to-$v$ distances from item~\ref{item4} in $v$'s label.

\item $S$ avoids all boundary vertices of all the pieces on $\TG[X,R_f]$. 

This is required only for the case where the lowest common ancestor of $R_u$ and $R_v$ is not an ancestor of $R_f$ (otherwise,  it is an ancestor of $X$ and a $u$-to-$v$ path cannot avoid the boundary vertices of $X$). 
The length of this path is found with a SSSP computation on the graph $G_3$ whose vertices are $u,v$, and $Sep(P)\setminus \partial P$ of all nodes $P$ on $\TG(X,R_u)$. 
The edges are in one-to-one correspondence with the $u$-to-$Sep(P)$ distances from item~\ref{item5} in $u$'s label and the $Sep(P)$-to-$v$ distances from item~\ref{item5} in $v$'s label. If $R_u=R_v$, the shortest path may not cross any of these separators; in that case the distance may be retrieved by a single SSSP computation in $R_u \setminus \{f\}$ (item~\ref{item3}).
\end{enumerate}

\paragraph{Correctness.}
Let us consider the three options for the shortest $(u,v,f)$-path $S$ (an illustration is provided in Figure~\ref{fig:cases}).
\begin{enumerate}
\item $S$ includes a vertex of $\partial R_f$. 
Let $a$ (resp. $b$) denote the first (resp. last) vertex of $S$ that belongs to $\partial R_f\setminus \{f\}$. The path $S$ can be partitioned into a $u$-to-$a$ prefix, an $a$-to-$b$ infix, and a $b$-to-$v$ suffix. All three subpaths are represented in $G_1$, and all paths represented in $G_1$ do not include $f$. 

\item $S$ avoids $R_f$ but includes a boundary vertex of some piece on $\TG[X,R_f]$. First observe that all $u$-to-$v$ paths in $G_2$ avoid some (not necessarily proper) ancestor of $R_f$ and therefore also avoid $f$. To see that $S$ is represented in $G_2$, let $Q$ denote the unique piece on $\TG(X,R_f]$ such that $S$ avoids $Q$ but visits its sibling $P$ (such a piece $Q$ must exist because $S$ avoids $R_f$ but visits some piece on $\TG[X,R_f]$). Since $S$ visits $P$ it must visit some vertex  of $\partial P$. Let $p$ be the first such vertex of $S$.  Partition $S$ into a shortest $u$-to-$p$ path in $G \setminus (Q\cup P) \cup \{p\}$ and a shortest $p$-to-$v$ path in $G\setminus Q$. These two subpaths are represented in $G_2$.         

\item $S$ avoids all boundary vertices of all the pieces on $\TG[X,R_f]$. 
If $S$ does not visit $\partial R_u$ (and thus $R_u=R_v$) then we find $S$ with an SSSP computation in $R_u \setminus \{f\}$. Otherwise, $S$ visits a separator vertex in of some piece that is a proper ancestor of $R_u$. Let $P$ be the rootmost such piece. Since $S$ avoids $\partial X$ we have that $S$ is restricted to $X$ and hence $P$ is a descendant of $X$. In fact, $P$ must be a {\em proper} descendant of $X$ (otherwise, $S$ visits $Sep(X)$ and therefore visits the boundary of both child-pieces of $X$ including the one on $\TG[X,R_f]$, a contradiction). We therefore have that $P \in \TG(X,R_u)$ and $S$ is restricted to $P$. Also observe that $S$ avoids $\partial P$ because otherwise $S$ must visit a separator vertex of some ancestor of $P$, contradicting $P$ being rootmost. Let $p$ be the first vertex of $S$ that belongs to $Sep(P)$. $S$ can be decomposed into a shortest path from $u$ to $p$ in $P \setminus \partial P$ that is internally disjoint from $Sep(P)$, and a suffix that is a shortest path from $p$ to $v$ in $P \setminus \partial P$; $S$ is thus represented in $G_3$. To see that no path represented in $G_3$ contains $f$, observe that $P$ may contain $f$, but since $R_f$ is not a descendant of $P$, $f$ must be a vertex of $\partial P$ and so is not visited by any path represented in $G_3$.
\end{enumerate}

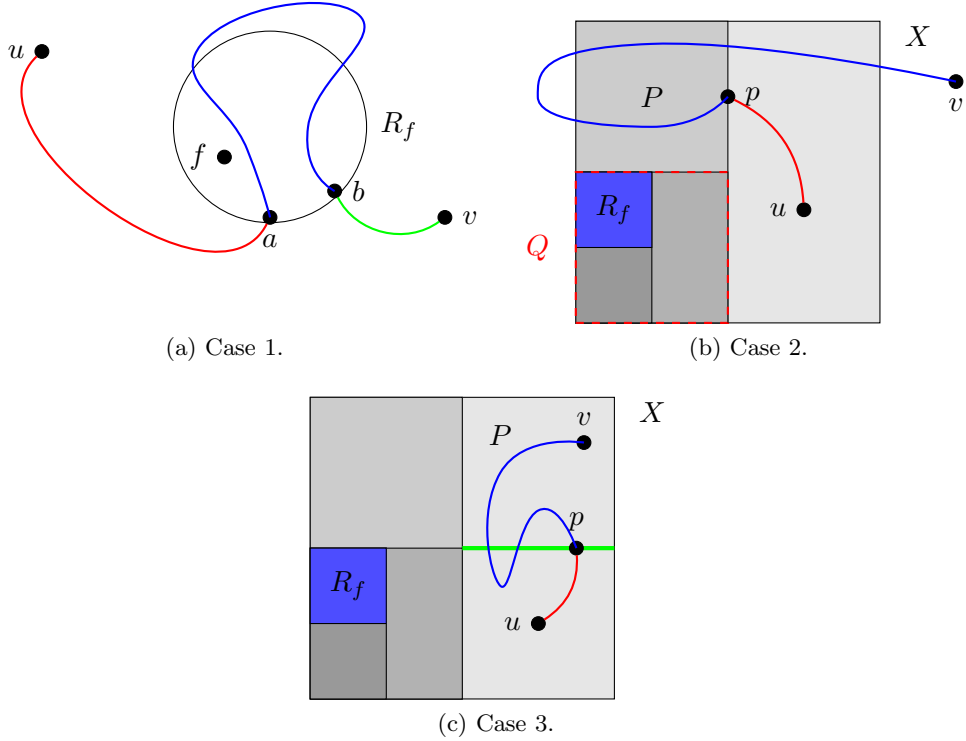
\begin{figure}[t]
\centering
\subfloat[Case 1.]{
{\begin{tikzpicture}
\node[shape=circle,scale=7,draw=black, label = {right: $R_f$}] (Rf) at (0,0) {};

\node[shape=circle,scale=0.5,draw=black, fill, label = {left: $u$}] (u) at (-3,1) {};
\node[shape=circle,scale=0.5,draw=black, fill, label = {right: $v$}] (v) at (2.3,-1.2) {};
\node[shape=circle,scale=0.5,draw=black, fill, label = {left: $f$}] (f) at (-0.6,-0.4) {};
\node[shape=circle,scale=0.5,draw=black, fill, label = {below: $a$}] (a) at (0,-1.2) {};
\node[shape=circle,scale=0.5,draw=black, fill, label = {right: $b$}] (b) at (0.85,-0.85) {};

\path [-,thick, color=red] (u) edge[bend right=100] (a);
\draw[thick,blue] plot [smooth, tension=1] coordinates {(a) (-0.3,-0.3) (-0.9,1.1)  (1.15,1.5) (0.5,0) (b)};
\path [-,thick, color=green] (b) edge[bend right=50] (v);
\end{tikzpicture}}
}
\subfloat[Case 2.]{
{\begin{tikzpicture}
\filldraw[draw=black,fill=gray!20]  (0,0)  rectangle (4,4);
\filldraw[draw=black,fill=gray!40]  (0,0)  rectangle (2,4);
\filldraw[draw=black,fill=gray!60]  (0,0)  rectangle (2,2);
\filldraw[draw=black,fill=gray!80]  (0,0)  rectangle (1,2);
\filldraw[draw=black,fill=blue!70]  (0,1)  rectangle (1,2);

\draw (0.5,1.5) node {$R_f$};
\draw (4.5,3.8) node {$X$};
\draw (1,3) node {$P$};
\draw[color=red, dashed, thick] (0,0) rectangle (2,2);
\draw[color=red] (-0.5,1) node {$Q$};

\node[shape=circle,scale=0.5,draw=black, fill, label = {left: $u$}] (u) at (3,1.5) {};
\node[shape=circle,scale=0.5,draw=black, fill, label = {right: $p$}] (p) at (2,3) {};
\node[shape=circle,scale=0.5,draw=black, fill, label = {below: $v$}] (v) at (5,3.2) {};
\path [-,thick, color=red] (u) edge[bend right=30] (p);
\draw[thick,blue] plot [smooth, tension=1] coordinates {(p) (1,2.6) (-0.5,3) (1,3.7) (v)};
\end{tikzpicture}}
}
\newline
\subfloat[Case 3.]{
\begin{tikzpicture}
\filldraw[draw=black,fill=gray!20]  (0,0)  rectangle (4,4);
\filldraw[draw=black,fill=gray!40]  (0,0)  rectangle (2,4);
\filldraw[draw=black,fill=gray!60]  (0,0)  rectangle (2,2);
\filldraw[draw=black,fill=gray!80]  (0,0)  rectangle (1,2);
\filldraw[draw=black,fill=blue!70]  (0,1)  rectangle (1,2);

\draw[ultra thick,color=green] (2,2) -- (4,2);

\draw (0.5,1.5) node {$R_f$};
\draw (4.5,3.8) node {$X$};
\draw (2.5,3.5) node {$P$};

\node[shape=circle,scale=0.5,draw=black, fill, label = {left: $u$}] (u) at (3,1) {};
\node[shape=circle,scale=0.5,draw=black, fill, label = {above: $p$}] (p) at (3.5,2) {};
\node[shape=circle,scale=0.5,draw=black, fill, label = {above: $v$}] (v) at (3.6,3.4) {};
\path [-,thick, color=red] (u) edge[bend right=30] (p);
\draw[thick,blue] plot [smooth, tension=1] coordinates {(p) (3,2.5) (2.5,1.5) (2.5,3) (v)};
\end{tikzpicture}
}
\caption{An illustration of the 3 different cases that arise for the query. In the figures we assume that $u,v \not\in R_f$ and the different colors in each path represent its decomposition as defined in the proof of correctness. In the figure for Case 2, the blue piece denotes $R_f$, while the siblings of its ancestors in $\TG(X,R_f]$ are denoted by different scales of gray; the deeper the piece is in $\TG$, the darker its color. Piece $Q$ is denoted be the red-dashed rectangle. For Case 3, the setting is the same and in our illustration $P$ is the child of $X$ that is not an ancestor of $R_f$. $Sep(P)$ is denoted by green.}
\label{fig:cases}
\end{figure}

We thus arrive at the following result.

\begin{theorem}\label{thm:dist_labels1}
Given a directed planar 
graph $G$ of size $n$, with real edge-lengths, we can assign an $O(n^{2/3})$-size label to each vertex of $G$ such that upon query $(u,v,x)$, where $u,v,x \in V(G)$, the length of the shortest $u$-to-$v$ path in $G \setminus \{x\}$ can be retrieved from the labels of $u$, $v$ and $x$.
\end{theorem}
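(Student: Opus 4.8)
The statement is the culmination of the label construction, the three-way query, and the case analysis above, so the plan is to collect those pieces and verify the bookkeeping. First I fix $r=n^{2/3}$ and re-check the space of the five-item label term by term: item~\ref{item1} is $O(n/r)$ since $\TG$ has $O(n/r)$ nodes; item~\ref{item2} is $O((n/r)\sqrt r)=O(n/\sqrt r)$; item~\ref{item3} is $O(r)$; item~\ref{item4} is $O(\sum_{P\in\TG}|\partial P|)$, which is $O(n/\sqrt r)$ for the decomposition tree of an $r$-division because boundary sizes decay geometrically up the tree with the leaf (region) level dominating; and item~\ref{item5} is $O(\sqrt n)$ since $|Sep(P)|$ along the root-to-$R_u$ path decreases geometrically from $O(\sqrt n)$. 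Summing, $O(n/r+n/\sqrt r+r+\sqrt n)=O(n^{2/3})$, the claimed bound. Since the theorem asks only that the distance be \emph{retrievable}, running Dijkstra on the (small, non-planar) auxiliary graphs below already suffices; Theorem~\ref{thm:FR} is only needed for the $\cOtilde(\sqrt n)$ query time.

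For retrieval I rename the faulty vertex $x$ to $f$, let $X$ be the lowest node of $\TG$ that is an ancestor of $R_f$ and of at least one of $R_u,R_v$, assume w.l.o.g.\ that it is an ancestor of $R_u$, and output the minimum of the three values computed on $G_1,G_2,G_3$ as in the query paragraph. Correctness then reduces, for each $G_i$, to two claims: \emph{soundness}, that every $u$-to-$v$ path of $G_i$ concatenates genuine subpaths of $G$, none of which uses $f$, so its length is at least $d_{G\setminus\{f\}}(u,v)$; and \emph{completeness}, that the true shortest $(u,v,f)$-path $S$ decomposes into edges of one of the $G_i$, so the minimum is at most $d_{G\setminus\{f\}}(u,v)$. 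Soundness follows by listing, for each edge of $G_i$, which label item of which vertex supplies it and checking that the associated subpath avoids $f$ --- e.g.\ in $G_1$ the $\partial R_f$-to-$\partial R_f$ edges come from item~\ref{item3} of $f$'s label and are genuine $G\setminus\{f\}$ distances, while the outer edges come from item~\ref{item2} and meet $R_f$ only at a boundary vertex that is not $f$. Completeness is precisely the three-case split: cut $S$ at its first and last vertices of $\partial R_f\setminus\{f\}$ when $S$ meets $\partial R_f$ (Case~1); otherwise locate the transition piece $Q\in\TG(X,R_f]$ that $S$ avoids while visiting its sibling $P$, and cut $S$ at its first vertex of $\partial P$ (Case~2); otherwise cut $S$ at its first vertex of a separator $Sep(P)$ with $P\in\TG(X,R_u)$ (Case~3), or note $R_u=R_v$ and answer inside $R_u\setminus\{f\}$.

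The step I expect to be the main obstacle is Case~3, together with the degenerate sub-cases. In Case~3 one must extract, from the hypothesis that $S$ avoids $\partial Z$ for every $Z\in\TG[X,R_f]$, a sharp structural fact: either $R_u=R_v$ and $S$ lies entirely in $R_u$, or the \emph{rootmost} piece $P$ whose separator $S$ meets is a \emph{proper} descendant of $X$, lies on $\TG(X,R_u)$, confines $S$, and is not crossed by $S$ through $\partial P$; and then, for soundness of $G_3$, one observes that since $R_f$ is not a descendant of $P$, the vertex $f$ can lie in $P$ only on $\partial P$, hence is avoided by every path of $G_3$. The degenerate cases (any two of $R_u,R_v,R_f$ coinciding) are routine but must be handled: if $R_u=R_f$ or $R_v=R_f$ one replaces the relevant item~\ref{item2} distances by the corresponding distances inside $R_f\setminus\{f\}$, available from item~\ref{item3}; if $R_u=R_v$ and $S$ meets no ancestor separator, it is found by one SSSP in $R_u\setminus\{f\}$. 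Once all cases are verified, the minimum of the three computed values equals $d_{G\setminus\{f\}}(u,v)$, which with the $O(n^{2/3})$ space bound gives the theorem.
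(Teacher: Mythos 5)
Your proposal reconstructs the paper's own argument: the same five-item label with $r=n^{2/3}$, the same three auxiliary graphs $G_1,G_2,G_3$ keyed to whether $S$ meets $\partial R_f$, meets a boundary on $\TG[X,R_f]$ while avoiding $R_f$, or avoids all such boundaries, and the same soundness/completeness split with the identical structural observation in Case~3 (rootmost piece $P$ is a proper descendant of $X$ on $\TG(X,R_u)$, $S$ is confined to $P\setminus\partial P$, and $f$ can lie in $P$ only on $\partial P$). This matches the paper's proof; no substantive difference.
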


\begin{remark}
Let us note, that any graph $G$ of size $n$ from a family of graphs that hereditarily admits $\cO(\sqrt{n})$-size separators (such as $H$-minor free graphs and bounded genus graphs) can be recursively decomposed so that we get an $r$-division (perhaps not with the few-holes property). As our labeling scheme does not require the few-holes property, \cref{thm:dist_labels1} actually applies to any such graph family.
\end{remark}

\paragraph{Extension for counting.}
We now show how to extend our single-fault labeling from reporting $u$-to-$v$ shortest paths in $G \setminus \{f\}$ to  counting the number of $u$-to-$v$ shortest paths in $G \setminus \{f\}$. Our modification does not increase the label size (assuming that each number we store fits into a single word, see the discussion in the introduction).
However, the efficient query algorithm cannot be applied, leading to $\cOtilde(n^{2/3})$ query time. 

In order to extend the labeling scheme for counting, 
for every $u$-to-$v$ shortest path distance which is stored in our label, we also store the number of such $u$-to-$v$ shortest paths. 
The change in query time is that instead of the SSSP computations on $G_1,G_2,G_3$ we use an SSSP computation that counts shortest paths. That is, for each edge in $G_i$ there is a value representing its multiplicity (the value we added to the label), and we want to compute the number of shortest paths with respect to the multiplicities. This extension can be achieved by a trivial extension to Dijkstra's algorithm, resulting in $\cOtilde(n^{2/3})$ query time (In contrast, FR-Dijkstra has no known extension for counting shortest paths). The following lemma proves the correctness of our labeling scheme.

\begin{lemma}
Every shortest path from $u$ to $v$ in $G \setminus \{f\}$ is represented exactly once in the query graphs $G_1,G_2,G_3$.
\end{lemma}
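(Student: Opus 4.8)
The plan is to build an explicit bijection between the shortest $u$-to-$v$ paths in $G\setminus\{f\}$ and the objects that the counting version of Dijkstra on $G_1\cup G_2\cup G_3$ sums over: pairs $(W,\phi)$ where $W$ is a minimum-weight $u$-to-$v$ walk in some $G_i$ and $\phi$ picks, for each edge of $W$, one of the shortest subpaths counted by the multiplicity stored for that edge, subject to the chosen subpaths concatenating into a single path. The starting point is that the three-way case analysis of the correctness proof partitions \emph{all} $(u,v,f)$-paths, hence all shortest ones: the cases ``$S$ meets $\partial R_f$'', ``$S$ avoids $R_f$ but meets $\partial P$ for some $P$ on $\TG[X,R_f]$'', and ``$S$ meets none of those boundaries'' are mutually exclusive (since $\partial R_f\subseteq R_f$) and exhaustive, the degenerate situations ($u$ or $v$ inside $R_f$, or $R_u=R_v$ with $S$ disjoint from $\partial R_u$, so that $S$ lies in $R_u\setminus\{f\}$) being handled exactly as in the query description.

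Fix a shortest $S$ that falls in case $i$. From $S$ I would extract a \emph{canonical} decomposition indexing a walk $W_S$ in $G_i$: in case $1$ the cut vertices are the first and last vertices $a,b$ of $S$ on $\partial R_f\setminus\{f\}$; in case $2$ it is the first vertex $p$ of $S$ on $\partial P$, where $P$ is the sibling of the rootmost piece of $\TG(X,R_f]$ avoided by $S$; in case $3$ it is the first vertex $p$ of $S$ on $Sep(P)$, where $P$ is the rootmost proper ancestor of $R_u$ whose separator $S$ meets. These cut vertices depend only on $S$, so $W_S$ is determined by $S$. A squeeze argument then yields at once that each subpath in this decomposition is a \emph{shortest} path in the restricted graph whose distance is stored for it --- hence one of the paths counted by the stored multiplicity --- and that $W_S$ is minimum-weight in $G_i$: each subpath is at least as long as the corresponding stored distance, so the weight of $W_S$, being the sum of these stored distances, is at most $|S|$; conversely every walk of $G_i$ maps, via the stored subpaths, to a $u$-to-$v$ walk avoiding $f$ (this is the ``no represented path uses $f$'' part of the correctness proof), so has weight at least $d_{G\setminus\{f\}}(u,v)=|S|$; thus all these inequalities are equalities. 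In the other direction, any minimum-weight walk of $G_i$ together with any consistent choice $\phi$ concatenates to a $u$-to-$v$ walk of weight $d_{G\setminus\{f\}}(u,v)$ that avoids $f$, i.e.\ to a shortest $u$-to-$v$ path in $G\setminus\{f\}$ (here I use, as does the paper, that shortest paths are well defined, so that a minimum-length walk is a simple path).

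It remains to check that $S\mapsto(W_S,\text{the actual subpaths of }S)$ and $(W,\phi)\mapsto(\text{concatenation})$ are mutually inverse; one direction is immediate, and the other --- that the junctions of a minimum-weight walk $W$ are exactly the canonical cut vertices of the path it yields --- is where ``exactly once'' (rather than merely ``at least once'') is at stake and is the main obstacle, since a shortest path can cross the relevant separator many times and we must prevent it from being realised by several walks of the same $G_i$. For $G_2$ and $G_3$ this is easy: these graphs have no edges among their separator vertices, so every $u$-to-$v$ walk has length $2$, its unique junction is forced, and the ``rootmost piece'' rule leaves no alternative (a competing piece is either strictly higher, contradicting the rule, or nested inside a piece avoided by $S$, hence never met). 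The delicate case is $G_1$, which does permit transit through $\partial R_f$: internal disjointness from $R_f$ of the stored $u$-to-$\partial R_f$ and $\partial R_f$-to-$v$ distances forces the endpoints of the first and last edges of $W$ to be the first and last crossings of $\partial R_f$ by the concatenated path, but pinning down the intermediate junctions in addition requires reading the construction so that the $\partial R_f$-to-$\partial R_f$ multiplicities stored with $f$ count only shortest paths internally disjoint from $\partial R_f$ (equivalently, taking $G_1$ bipartite between an ``entry'' and an ``exit'' copy of $\partial R_f\setminus\{f\}$, so that exactly one transit edge is ever used); the intermediate junctions are then forced to be precisely the consecutive crossings of $\partial R_f$ along the path, and $W$ is unique. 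Granting this, the counting Dijkstra on $G_1\cup G_2\cup G_3$ returns the sum over minimum-weight $u$-to-$v$ walks $W$ of the product of the edge multiplicities of $W$, which by the bijection equals the number of shortest $u$-to-$v$ paths in $G\setminus\{f\}$, each counted once.
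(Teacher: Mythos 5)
You take the same overall route as the paper: partition the shortest $(u,v,f)$-paths according to the three query cases, exhibit for each $S$ a canonical decomposition into subpaths realizing a walk $W_S$ in the corresponding $G_i$, and argue that this is the only $(W,\phi)$-pair realizing $S$. For $G_2$ and $G_3$ your observation that these are two-layer graphs (no edges among separator vertices, so every $u$-to-$v$ walk has exactly two edges and a single junction) is the right structural reason for uniqueness, and it matches the paper's ``deepest $P$ / first vertex of $\partial P$'' argument. One small caution there: ``a competing piece is strictly higher, contradicting the rule'' is not by itself a proof that $S$ cannot be realized via that piece; what one actually needs, and what the paper supplies, is that for such a shallower sibling $P'$ some subpath of $S$ leaves the domain allowed by the corresponding edge of $G_2$, so the alternative $(W,\phi)$ is not valid.

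Your concern about $G_1$ is a genuine gap in the paper's proof as written, and your fix is the right one. The paper claims $S$ is represented exactly once because it has ``a unique decomposition into subpaths $S_1S_2S_3$''; this rules out competing three-edge walks but not longer ones. As constructed, $G_1$ induces a complete digraph on $\partial R_f\setminus\{f\}$ (item~(iii) stores unrestricted $\partial R_f$-to-$\partial R_f$ distances in $G\setminus\{f\}$), so longer $u$-to-$v$ walks exist and can be shortest. Concretely, if $S$ meets $\partial R_f$ at $b_1$, $c$, $b_2$ in that order, then $u\to b_1\to b_2\to v$ and $u\to b_1\to c\to b_2\to v$ are both shortest in $G_1$ (each subpath of $S$ is shortest in its restricted graph, and the two walks have equal total length $|S|$), and both reconstruct $S$ with the appropriate $\phi$; in general $S$ is counted $2^{k}$ times where $k$ is the number of crossings of $\partial R_f$ strictly between the first and the last. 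Your bipartite patch --- an entry copy and an exit copy of $\partial R_f\setminus\{f\}$ so that every $u$-to-$v$ walk has exactly three edges --- is the cleanest fix: it keeps all stored quantities as in the paper and changes only the query graph. Your alternative (restricting the $\partial R_f$-to-$\partial R_f$ multiplicities to paths internally disjoint from $\partial R_f$, with the maximal decomposition as the canonical walk) also works, but the two are genuinely different constructions rather than two readings of the same one, and the paper's text supports neither verbatim, so a corrected proof should say explicitly which patch it adopts.
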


\begin{proof}
The same argument as in the correctness subparagraph proves that every shortest path is represented at least once in the query graphs. It remains to show that every path is represented at most once. Let us consider the three cases for a shortest $(u,v,f)$-path:
\begin{enumerate}
\item $S$ includes a vertex of $\partial R_f$. $S$ is not represented in $G_2,G_3$ because every path that is represented there must avoid an ancestor of $R_f$. $S$ is represented exactly once in $G_1$ because it has a unique decomposition into subpaths $S_1S_2S_3$ where $S_1$ is from $u$ to the first vertex $b_1$ of $S$ in $\partial R_f$, $S_2$ is from $b_1$ to the last vertex $b_2$ of $S$ in $ \partial R_f$, and $S_3$ is from $b_2$ to $v$.

\item $S$ avoids $R_f$ but includes a boundary vertex of some piece on $\TG[X,R_f]$. $S$ is not represented in $G_1$ because all the paths that are represented there touch $R_f$, it is also not represented in $G_3$ since every path there avoids all boundary vertices of all pieces in $\TG[X,R_f]$. To prove that $S$ is represented in $G_2$ exactly once we again show that $S$ can be uniquely decomposed into three subpaths in $G_2$.
Let $P$ be the sibling of some piece $Q \in \TG[X,R_f]$ s.t. $S$ visits $P$, and let $p \in \partial P \cap S$. If $P$ is not the deepest such piece, then $S$ also visits $Q$ but the edge $(p,v)$ in $G_2$ counts only paths in $G \setminus Q$, hence $S$ is not represented as a $u-p-v$ path in $G_2$. If $P$ is the deepest such piece but $p$ is not the first vertex in $\partial P$ that $S$ visits, then the $u$-to-$p$ subpath of $S$ is not represented as an edge $(u,p)$ in $G_2$ since only paths in $G \setminus (P \cup Q) \cup \{p\}$ are. 

\item  $S$ avoids all boundary vertices of all the pieces on $\TG[X,R_f]$. $S$ is not represented in $G_1,G_2$ because every path that is represented there touches some piece in $\TG[X,R_f]$. It is counted exactly once in $G_3$ by a similar argument to case 2 above: $S$ is counted once in $G_3$ by the first separator vertex that $S$ visits in the rootmost piece that it visits. Finally, in the case where $R_u = R_v = R_f$, we perform Dijkstra (with its extension for counting) on $R_u \setminus \partial R_u$.\qedhere
\end{enumerate}
\end{proof}

\paragraph{Efficient queries for planar graphs.}
We can easily achieve $\cOtilde(n^{2/3})$ query-time, since this is the size of the graphs that we construct and can thus perform Dijkstra for SSSP computations. This query time applies to any graph family with $\sqrt{n}$-size separators, such as minor-free graphs. 
On planar graphs, in order to perform queries more efficiently we have to assume random access to the labels of vertices $u$, $v$ and $x$; retrieving them would require $O(n^{2/3})$ time.
We present an $\cOtilde(\sqrt{n})$-time query algorithm for planar graphs at the expense of increasing the labels' size by polylogarithmic factors.

Let us now formally state the main result of~\cite{DBLP:conf/soda/BaswanaLM12}.

\begin{theorem}[\cite{DBLP:conf/soda/BaswanaLM12}]\label{thm:baswana}
Given a weighted directed planar graph $G$ of size $n$ and a source $s \in V(G)$, we can construct in $O(n \log^4 n)$ time an $O(n \log^4 n)$-size data structure, that upon query $(v,x)$, for $v,x \in V(G)$, returns the $s$-to-$v$ distance in $G\setminus \{x\}$ in time $O(\log n)$. 
\end{theorem}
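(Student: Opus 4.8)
The plan is to build the oracle around a single-source shortest path tree $T$ of $G$ rooted at $s$, computed once in $O(n\log n)$ time. The first observation is that $T$ already resolves most queries: for a query $(v,x)$ we have $d_{G\setminus\{x\}}(s,v)=+\infty$ if $x=v$ or $x=s$, and $d_{G\setminus\{x\}}(s,v)=d_G(s,v)$ whenever $x$ is not a strict ancestor of $v$ in $T$, since then the tree path $T[s,v]$ survives in $G\setminus\{x\}$. Storing the array $d_G(s,\cdot)$ together with an $O(1)$-time ancestor oracle for $T$ handles all these cases in $O(n)$ space and $O(1)$ query time, so the entire difficulty lies in queries where $x$ is a strict ancestor of $v$.

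The second step is a structural lemma on replacement paths, in the spirit of the planar replacement-paths results~\cite{DBLP:journals/talg/EmekPR10,DBLP:journals/talg/KleinMW10}. Two elementary facts about $T$ drive it: (i) for any $a\notin T_x$ (the subtree of $T$ rooted at $x$), the tree path $T[s,a]$ is disjoint from $T_x$, so $d_{G\setminus T_x}(s,a)=d_G(s,a)$; and (ii) if $c$ is the child of $x$ with $v\in T_c$ and $b$ is any ancestor of $v$ inside $T_c$, then the directed tree path $T[b,v]$ is a shortest $b$-to-$v$ path and avoids $x$. Cutting a shortest $s$-to-$v$ path avoiding $x$ at its first entry into $T_x$ and at the last vertex it shares with $T[c,v]$ therefore decomposes it into a tree-path prefix $T[s,a]$ with $a\notin T_x$, a detour edge into $T_x$, a ``middle'' piece that stays in $T_x$ and avoids $x$, and a tree-path suffix inside $T_c$. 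This reduces a query to minimizing, over detour edges, a sum of a precomputable tree-distance term and one term that still couples the failed ancestor with the middle piece.

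The third step is a divide-and-conquer following the planar replacement-paths paradigm~\cite{DBLP:journals/talg/KleinMW10}: split $G$ by a balanced separator formed from $O(1)$ root-to-node paths of $T$ (a fundamental-cycle separator), recurse on the two sides, and summarize the separator by an MSSP-based dense distance graph so that each side carries only $O(\sqrt{n})$ extra boundary edges. A query $(v,x)$ is routed according to the sides of $v$ and $x$: either the replacement path can be confined to the side of $v$ (and the recursively built structure there returns it, using the boundary edges for the parts of the path that leave that side), or it crosses the current separator and its length decomposes through the $O(\sqrt{n})$ separator vertices into distances stored on that side or read from the summary. Recursing to depth $O(\log n)$ and storing the Monge-structured summaries at each level gives total size and preprocessing $O(n\log^{O(1)}n)$ --- a careful accounting being needed to reach $O(n\log^4 n)$ --- while a query touches only the $O(\log n)$ levels relevant to $(v,x)$, performing at each one a single minimum over a Monge matrix in $O(\log n)$ time via the Monge-matrix machinery that also underlies FR-Dijkstra (Theorem~\ref{thm:FR}), for $O(\log n)$ query time overall.

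The main obstacle is exactly the ``middle'' piece: a shortest $s$-to-$a$ path that merely avoids the single vertex $x$ need not be a path in $G$ nor in any subgraph fixed independently of $x$, so the coupling between the failed ancestor and the detour cannot be tabulated naively. The recursion must be shown to dissolve it --- at each level the failure $x$ is either resolved against the current separator (its effect on the rest of $G$ entering only through the $x$-independent boundary summary) or pushed into a recursively built structure on a graph with at most $\approx 2n/3$ real vertices plus $O(\sqrt n)$ summary edges --- so that after $O(\log n)$ descents every failure has been accounted for. Carrying this out while keeping every level's summaries Monge-structured (so the per-level computation is an $O(\log n)$-time matrix minimum rather than a Dijkstra run), and using persistence so that the $O(\log n)$ query is worst-case rather than amortized, is the delicate core of the construction.
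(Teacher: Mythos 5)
Note first that the paper does not prove Theorem~\ref{thm:baswana}; it imports it verbatim from Baswana, Lath, and Mehta~\cite{DBLP:conf/soda/BaswanaLM12} as a black box, so there is no internal proof to compare your attempt against. What you wrote is a reconstruction, and it has a concrete bug in its structural lemma. Cutting the replacement path $P$ ``at its first entry into $T_x$'' does not yield a middle piece that ``stays in $T_x$'': after first entering $T_x$, $P$ may leave and re-enter arbitrarily many times. The correct cut is at the \emph{last} vertex $a$ of $P$ outside $T_x$; then the remaining suffix of $P$ does stay in $T_x\setminus\{x\}$, and the $s$-to-$a$ prefix can be exchanged for $T[s,a]$ because $d_{G\setminus\{x\}}(s,a)=d_G(s,a)$ when $a\notin T_x$. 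Symmetrically, ``the last vertex it shares with $T[c,v]$'' is just $v$ itself, making your claimed tree-path suffix empty; you want the \emph{first} vertex of $P$ on $T[c,v]$.

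Even after correcting the lemma, the divide-and-conquer half of the plan does not establish the claimed bounds. It asserts, with no argument, that at each separator level a query reduces to a single $O(\log n)$-time Monge-matrix minimum, that persistence converts this to worst-case time, and that preprocessing and space come to $O(n\log^4 n)$ --- none of which is immediate, precisely because the quantity being minimized couples the failed ancestor $x$ with the detour through the separator. This is exactly the ``middle piece'' you yourself flag as the obstacle and then leave unresolved, so the plan as written skips the technical heart of the construction. What you do get right is the reduction to the case where $x$ is a strict ancestor of $v$ in the SSSP tree (indeed the starting point in~\cite{DBLP:conf/soda/BaswanaLM12}) and the broad toolkit: SSSP tree, separator recursion, and the MSSP/Monge machinery behind Theorem~\ref{thm:FR}.
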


{\emph Cases 2 \& 3.} $G_2$ and $G_3$ are of size $O(\sqrt{n})$ and they can be constructed in $O(\sqrt{n})$ time from the labels of $u,v$ and $f$. We can compute SSSPs in these graphs in $O(\sqrt{n} \log n)$ time using Dijkstra's algorithm.
We handle the subcase of Case 3 in which $R_u=R_v$ and the sought shortest path does not cross $\partial R_u$ as follows.
The label of $u$ additionally stores the single-source single-failure distance oracle of Theorem~\ref{thm:baswana} for graph $R_u \setminus (\partial R_u \setminus \{u\})$ and source $u$. It occupies $\cOtilde(r)=\cOtilde(n^{2/3})$ additional space.
Upon query, we simply query this oracle with $(v,x)$.

{\emph Case 1.} This is the only involved case, as $G_1$ can be of size $\Theta(r)=\Theta(n^{2/3})$ and we aim at performing SSSP computations in time $\cOtilde(\sqrt{n})$.
Let us note that the distances of $u$ to $\partial R_f \setminus \{f\}$ in the case that $R_u = R_f$ can be computed in time $\cOtilde(\sqrt{r})=\cOtilde(n^{1/3})$ if we have stored the oracle of Theorem~\ref{thm:baswana} for graph $R_u$ and source $u$ in the label of $u$. The case $R_v=R_f$ can be treated analogously.

In order to perform efficient SSSP computations we resort to FR-Dijkstra (Theorem~\ref{thm:FR}).
We first make a minor modification to item~\ref{item3} of the label so that the Monge property required for FR-Dijkstra is satisfied: instead of storing $\partial R_u$-to-$\partial R_u$ distances in $G\setminus\{u\}$, we instead store $\partial R_u$-to-$\partial R_u$ distances in $R\setminus\{u\}$ and $\partial R_u$-to-$\partial R_u$ distances in $G\setminus ((R \setminus \partial R)\cup \{u\})$.
This ensures that the set of vertices over which the $DDG$s are built lie on a constant number of faces of the reference graph.
The size of the label is unaffected by this modification. We can then use Theorem~\ref{thm:FR} in a straightforward way to compute the sought shortest path in time $\cOtilde(\sqrt{r})=\cOtilde(n^{1/3})$.

\paragraph{Efficient preprocessing for planar graphs.}
The labels can be naively constructed in $O(n^2)$ time. This is true for any graph family with $\sqrt{n}$-size separators. For the case of planar graphs, we now show that the construction time can be improved to $\cOtilde(n^{5/3})$. 

The complete recursive decomposition of $G$, required for item~\ref{item1}, can be computed in $\cO(n)$ time~\cite{DBLP:conf/stoc/KleinMS13}.
For the rest of the items, we use MSSP data structure for an appropriate subgraph of $G$, or of the reverse graph of $G$, i.e.~$G$ with all its edges reversed.

The multiple-source shortest paths (MSSP) data structure~\cite{MSSP} represents all shortest path trees rooted at the vertices of a single face $g$ in a planar graph.
It can be constructed in $\cO(n \log n)$ time, requires $\cO(n\log n)$ space, and can report any distance between a vertex of $f$ and any other vertex in the graph in $\cO(\log n)$ time. 
Using a simple modification of the underlying graph, presented in~\cite{CharalampopoulosMozesTebeka}, we can ensure that MSSP returns the length of the shortest path that is internally disjoint from a prespecified subset of the vertices of $g$.

To compute the information required for item~\ref{item2} of the labels, we build an MSSP data structure for the reverse graph of $G \setminus (R \setminus \partial R)$ for each piece $R$ in the $r$-division and each of the $O(1)$ holes $g$ on which the vertices of $\partial R$ lie. We then query the sought distances. The time required to construct the MSSP data structures is $\cOtilde(n^2/r)=\cOtilde(n^{4/3})$ and the time required for computing the distances is $\cOtilde(n^2/\sqrt{r})=\cOtilde(n^{5/3})$. The precomputations for items~\ref{item3}, \ref{item4} and the first part of item~\ref{item5} can be done analogously --for item~\ref{item3} we store the distances described in the description of the efficient query implementation.

For the second part of item~\ref{item5}, we can not make use of MSSP, as the shortest path from $u$ to $p \in Sep(P)$ is allowed to cross $Sep(P)$. We can instead build an $\cOtilde(|P|)$-size exact distance oracle for $P \setminus \partial P$ in $\cOtilde(|P|^{3/2})$ time that answers distance queries in $\cOtilde(|P|^\epsilon)$ time, for any constant $\epsilon >0$ (\cite{DBLP:conf/stoc/Charalampopoulos19}); we pick $\epsilon = 1/6$. We then query this oracle for the all distances we need to compute in $P \setminus \partial P$. Over all pieces, the preprocessing time is $\cOtilde(n^{3/2})$ and the sought distances are retrieved in $\cOtilde(n^{3/2} \cdot n^{1/6})=\cOtilde(n^{5/3})$

To wrap up, the global preprocessing time is $\cOtilde(n^{5/3})$ and is upper bounded by the total size of the labels up to polylogarithmic factors.

\section{Labeling for Counting Shortest Paths}\label{Labeling for Counting Shortest Paths}
In this section we design labels such that given the labels of any $k+2$ vertices $s,t,v_1,v_2,\ldots,v_k$, we should return the number of $s$-to-$t$ paths that avoid vertices $v_1, \ldots ,v_k$ and whose length is equal to $d(s,t)$ (the original $s$-to-$t$ distance in $G$). Note that this is the same as returning the number of shortest $s$-to-$t$ paths in $G \setminus \{v_1, v_2, \dots, v_k\}$ only if the length of the shortest $s$-to-$t$ path does not change when $\{v_1, v_2, \dots, v_k\}$ fail. We show that the labeling of~\cite{ISAAC} (with labels of size $O(\sqrt{n})$) actually works in this more general setting and show how to perform a query in $\tilde{O}(\sqrt{n} \cdot k)$ time. 
We assume in this section that edge weights are strictly positive.

\paragraph{The label.}
We first compute a complete recursive decomposition of $G$.
The label of each vertex $v$ in $G$ then consists of the following information:
\begin{enumerate}[label=(\roman*)]

\item For each ancestor piece $P$ of $v$, for every $u \in Sep(P)$,  the number $p_1(v,u)$ and length $d_1(v,u)$ of all $v$-to-$u$ shortest paths in $P\setminus Sep(P) \cup \{ u \}$.
\item   For each ancestor piece $P$ of $v$, for every $u \in Sep(P)$,  the number $p_2(u,v)$ and length $d_2(u,v)$ of all $u$-to-$v$ shortest paths in $P\setminus \partial P$.
\end{enumerate}

In what follows, in the case that $u$ is in many separators of ancestor pieces of $v$, when referring to $d_1(v,u)$, $p_1(v,u)$, $d_2(u,v)$ and $p_2(u,v)$ we mean the values computed for the rootmost such piece.

\paragraph{The query - without faults.}
When there are no faulty vertices, every $s$-to-$t$ shortest path $Q$ in $G$ is uniquely determined by a piece $P$ in the recursive decomposition and a vertex $u\in Sep(P)$.
The piece $P$ is the rootmost ancestor piece of $s$ in the recursive decomposition s.t.~$Q$ visits $Sep(P)$ and therefore does not visit $\partial P$. Such a piece $P$ must be an ancestor of both $s$ and $t$. The vertex $u\in Sep(P)$ is the first vertex of $Sep(P)$ visited by $Q$. $Q$ can thus be decomposed into a prefix $Q_1$ in $P\setminus Sep(P) \cup \{ u \}$ from $s$ to $u$, and a suffix  $Q_2$ in $P\setminus \partial P$ from $u$ to $t$.  
For every possible $u$ we have the number of such $Q_1$ in (i) of $s$ and the number of such $Q_2$ in (ii) of $t$. We therefore add the term $p_1(s,u)\cdot p_2(u,t)$ to the answer. However, we only wish to add this term if $d(s,u)+d(u,t)=d(s,t)$ (otherwise, we are counting non-shortest paths). We have $d(s,u)+d(u,t)$ from the labels of $s$ and $t$. We compute $d(s,t)$ as follows. Let $A[v]$ be the union of separator vertices of all ancestors of $v$. Then 
\begin{equation}\label{eq:d}
d(s,t)  = \min_{u \in A[s] \cap A[t] }  (d_1(s,u) + d_2(u,t)),
\end{equation}

\noindent and the overall query is computed as

\begin{equation}\label{eq:paths}
    paths(s,t)\ \ \  = \!\!\!\!\!\!\!\!\!\!\!\! \sum_{\substack{u \in A[s] \cap A[t]  \text{ s.t }\\ d_1(s,u)+d_2(u,t) = d(s,t)}} \!\!\!\!\!\!\!\!\!\!\!\!p_1(s,u) \cdot p_2(u,t)
\end{equation}

It takes $\tilde{O}(\sqrt{n})$ time to perform such query because there are $O(\sqrt{n})$ vertices in $A[s] \cap A[t]$ and for each of them we perform $\tilde{O}(1)$ calculations. We also compute $d(s,t)$ beforehand in $\tilde{O}(\sqrt{n})$ time.

\paragraph{The query - with faults.}
We begin with an $\tilde{O}( \sqrt{n} \cdot k^2)$ time query and then improve this to $\tilde{O}(\sqrt{n} \cdot k)$. We order the faulty vertices in the increasing order of their distances from $s$ in $G$, and index them $v_1, \ldots ,v_k$ accordingly. For convenience we refer to $s$ as $v_0$ and to $t$ as $v_{k+1}$. Denote by $R[j]$  the number of $s$-to-$v_j$ shortest paths in $G$ that avoid $v_1,\ldots,v_{j-1}$. Denoting by  $paths(v_i,v_j)$  
the number of $v_i$-to-$v_j$ shortest paths in $G$ we obtain the recurrence:
\begin{equation}\label{eq:R old}
R[j] = paths(s,v_j)\ \ \  \!\!\!\! -  \!\!\!\!\!\!\!\!\!\!\!\!\!\!\!\!\!\!\! \sum_{\substack{i < j  \text{ s.t. } \\ d(s,v_i)+d(v_i,v_j)=d(s,v_j)\\}}  \!\!\!\!\!\!\!\!\!\!\!\!\!\!\!\!\!\!\!\! R[i]\cdot  paths(v_i,v_j)
\end{equation}
To see why this recurrence holds, it suffices to show that every shortest path $Q$ in $G$ from $s$ to $v_j$ that visits at least one of $v_1, \ldots ,v_{j-1}$  is counted in the second term exactly once. It is clear that every such path $Q$ is counted at least once, because it can be decomposed into a prefix composed of a shortest path from $s$ to the first $v_i$ that $Q$ visits (i.e.~is counted by $R[i]$) and a suffix composed of a $v_i$-to-$v_j$ path (i.e.~counted by $paths(v_i,v_j)$).
To see why every path $Q$ is counted at most once, notice that every such path $Q$ visits the faulty vertices monotonically with respect to their ordering. In other words, if $Q$ visits some $v_i$ and then some $v_j$ then $i<j$. This holds because if $v_i$ is on a shortest path from $s$ to $v_j$ then $d(s,v_i)<d(s,v_j)$,
and by our ordering of the faulty vertices $i<j$. Since $R[i]$ only counts paths that are internally disjoint from failed vertices, the only time $Q$ is counted is when we count paths of the form  $s\rightsquigarrow v_i \rightsquigarrow v_j$, where $v_i$ is the first faulty vertex $Q$ visits.

Given $R[1],\ldots,R[j-1]$ we can compute $R[j]$ in  $\tilde{O}(\sqrt{n} \cdot j)$ using the recurrence. For each faulty vertex $v_i$ with $i<j$ we perform a $paths(v_i,v_j)$ query as described above which takes $\tilde{O}(\sqrt{n})$ time, so the overall complexity is $\tilde{O}(\sqrt{n} \cdot k^2)$.

\paragraph{Improved query time.}
We now show how to improve the query time from $\tilde{O}(\sqrt{n} \cdot k^2)$ to  $\tilde{O}(\sqrt{n} \cdot k)$. 
In order to achieve this, we cannot afford to compute $paths(v_i,v_j)$ for every pair $i,j$. Instead, we will express $R[j]$ as a summation over $O(\sqrt{n})$ terms that we can compute in $\tilde{O}(1)$ time.

\noindent By combining equations~\eqref{eq:paths} and~\eqref{eq:R old}, and since $paths(s,v_j)$ can be computed in $\tilde{O}(\sqrt{n})$ time, we get that computing $R[j]$ boils down to computing the following double summation:
\begin{equation}\label{eq:comb}
\sum_{\substack{i < j  \text{ s.t. } \\ d(s,v_i)+d(v_i,v_j)=d(s,v_j)\\}}  \!\!\!\!\!\!\!\!\!\!\!\!\!\!\!\!\!\!\!\! R[i]  \ \ \ \ \ \ \sum_{\substack{u \in A[v_i] \cap A[v_j]  \text{ s.t }\\ d_1(v_i,u)+d_2(u,v_j) = d(v_i,v_j)}} \!\!\!\!\!\!\!\!\!\!\!\!\!\!\!\!\!\!\!\!p_1(v_i,u) \cdot p_2(u,v_j)
\end{equation}

\noindent The above sum counts all $s$-to-$v_j$  shortest paths $Q$ that can be decomposed into three parts: \\
$Q_1$ - a shortest $s$-to-$v_i$ path in $G$ (for some $v_i$)  that avoids $v_1, \ldots ,v_{i-1}$.\\
$Q_2$ - a shortest $v_i$-to-$u$ path in $P \setminus Sep(P) \cup \{u\}$ for some $u \in Sep(P)$, where $P$ is defined as the rootmost ancestor of $v_i$ s.t. $Q$ touches $Sep(p)$ ($u$ is the first vertex of $Sep(P)$ in $Q_2$). \\
$Q_3$ - a shortest $u$-to-$v_j$ path in $P \setminus \partial P$.

\noindent We use the same decomposition into $Q_1,Q_2,Q_3$ but sum the terms differently. Denoting $D(s,u) =  \min_i (d(s,v_i)+d_1(v_i,u))$ we compute: 
\begin{equation}\label{eq:R new}
     \sum_{\substack{u \in A[v_j] \text{ s.t. }\\ D(s,u)+d_2(u,v_j) = d(s,v_j)}} \!\!\!\!\!\!\!\!\!\!\!\! \!\!\!\! p_2(u,v_j)  {\color{blue} \sum_{\substack{i < j \text{ s.t. } u \in A[v_i] \text{ and } \\ d(s,v_i)+d_1(v_i,u) = D(s,u)}}\!\!\!\!\!\!\!\!\!\!\!\!\!\!\!\!\! R[i] \cdot p_1(v_i,u)}
\end{equation}

Let us explain equation \eqref{eq:R new}. Denote the inner summation term (in blue) as $F_j(u)$. $F_j(u)$ counts the number of combinations for $Q_1Q_2$ by iterating over every faulty vertex $v_i$ where $i<j$ and $u \in A[v_i]$. For a fixed $v_i$, the number of such combinations is $R[i] \cdot p_1(v_i,u)$. Among all $Q_1Q_2$ combinations, we only want to sum combinations $Q_1Q_2$ that have length $d(s,u)$. Ideally, this could be imposed by adding the condition $d(s,v_i)+d_1(v_i,u) = d(s,u)$ to the inner sum. However, we cannot compute $d(s,u)$ because we do not have the label of $u$. Instead, we add the condition $d(s,v_i)+d_1(v_i,u) = D(s,u)$ where $D(s,u) =  \min_i (d(s,v_i)+d_1(v_i,u))$ (observe that $D(s,u) \ge d(s,u)$). This condition is easy to check using $d_1(v_i,u)$ stored in the label of $v_i$ and the value $d(s,v_i)$ which can be computed beforehand using equation \eqref{eq:d}.
The counting remains correct because in the outer sum we check that $D(s,u)+d_2(u,v_j) = d(s,v_j)$ which only holds if $D(s,u) = d(s,u)$ (because when $D(s,u) > d(s,u)$ then by the triangle inequality we have that $D(s,u)+d_2(u,v_j) > d(s,u)+d_2(u,v_j) \geq d(s,v_j)$). Note that even if $D(s,u) = d(s,u)$ it may be that $D(s,u)+d_2(u,v_j) > d(s,v_j)$. This happens in the case that there are no $s$-to-$v_j$ shortest paths  that visit $u$. In other words, we check that a path $Q_1Q_2Q_3$ is shortest by verifying that $d(s,v_i)+d_1(v_i,u)+d_2(u,v_j) = d(s,v_j)$. This is true iff $D(s,u) = d(s,u)$ and $d_2(u,v_j) = d(u,v_j)$ which means that $Q_1Q_2Q_3$ is indeed a shortest path. 

Observe that in the inner sum we consider only $i<j$. This is because for $i \geq j$ none of the paths from $s$ to $v_j$ that visit $v_i$ is shortest due to the ordering of the faulty vertices.

As for the outer sum, it counts the number of $Q_3$ paths for every $u \in A[v_j]$. Overall, we iterate over every $u \in A[v_j]$ and multiply $F_j(u)$ (the number of $Q_1Q_2$ paths) by $p_2(u,v_j)$ (the number of $Q_3$ paths) and obtain the answer.

Overall, in the $j$'th iteration we compute $R[j]$ using the $F_j(u)$ values according to equation \eqref{eq:R new}. Notice that $F_{j+1}(u)$ is either equal to $F_{j}(u)$ or to $F_{j}(u) + R[j] \cdot p_1(v_j,u)$. We can therefore compute $F_{j+1}(u)$ for every $u \in A[v_j]$ using the just computed $R[j]$ and $F_{j}(u)$. This takes total $\tilde{O}(\sqrt{n})$ time and $\tilde{O}(\sqrt{n}\cdot k)$ time over all the $k+2$ iterations. 

In order to check the distance restrictions in the summations we precompute $d(s,v_j)$ for every $0 \leq j \leq k+1$ and $D(s,u)$ for every $u \in \bigcup_{0 \leq i \leq k+1 } A[v_i]$. The former ($d(s,v_j)$) is computed using \eqref{eq:d}, and the latter ($D(s,u)$) is computed by iterating over every $i$ and $u \in A[v_i]$ and maintaining the minimum value for each $D(s,u)$.
The precomputation of $D(s,u)$ and $d(s,v_j)$ therefore takes $\tilde{O}(\sqrt{n} \cdot k)$ time.

\section{A Lower Bound on Labeling for Counting Shortest Paths} \label{Labeling Lower Bound for Counting Shortest Paths}

In this section we prove the following lower bound on labeling schemes for counting shortest paths (without faults) in graphs
such that the number of distinct shortest paths between any two nodes consists of at most $L$ bits.

The proof is a modification of the approach of Gavoille et al.~\cite{GPPR04} for standard distance labeling.
Their proof proceeds by assigning weights to the edges of a $\sqrt{n}\times \sqrt{n}$ grid graph so
that the shortest path from the $i$-th node in the first column to the $j$-th node in the first row consists of $j-1$ horizontal
edges, followed by $i-1$ vertical edges. Then, the proof hides a single bit in every intersection by creating or not a shortcut.
The shortest paths defined above are still of the same form, up to using the shortcut in case it exists: horizontal edges, possibly a shortcut, and then vertical edges.

\begin{theorem}\label{thm:lb_label}
Any labeling scheme for counting shortest paths in planar graphs such that the number of distinct shortest paths between any two
nodes consists of at most $L$ bits requires labels consisting of $\Omega(\sqrt{nL})$ bits.
\end{theorem}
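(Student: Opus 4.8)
The plan is to adapt the encoding argument of Gavoille et al.\ so that a collection of labels for a cleverly weighted grid is forced to encode many independent $L$-bit numbers. First I would take a $\sqrt{n/L}\times\sqrt{n/L}$ grid graph $H_0$ and assign weights to its edges exactly as in~\cite{GPPR04}, so that the unique shortest path from the $i$-th vertex $r_i$ of the first column to the $j$-th vertex $c_j$ of the first row goes $j-1$ steps horizontally and then $i-1$ steps vertically; call this ``staircase'' path $\pi_{ij}$. In the original construction a single bit is hidden at each interior grid vertex by optionally inserting a shortcut. To hide an $L$-bit number $N_{ij}$ at the intersection vertex of row $i$ and column $j$ instead, I would replace that vertex by a small gadget consisting of $L$ parallel two-edge ``bypass'' routes of the same length as the through-path, where the $\ell$-th bypass is present iff the $\ell$-th bit of $N_{ij}$ is $1$, together with a default through route of the same length; then the number of shortest $r_i$-to-$c_j$ paths equals $1+N_{ij}$ (the $+1$ from the default route, or we can arrange it to be exactly a prescribed quantity). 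Each gadget has $O(L)$ vertices and edges, there are $n/L$ gadgets, so the graph has $O(n)$ vertices, it is planar (the bypass routes can be drawn nested inside a face), and the number of shortest paths between any two vertices is $O(2^L)$, i.e.\ fits in $L+O(1)$ bits, as required.

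Next I would run the standard incompressibility/counting argument. Fix the labeling scheme; it assigns to each vertex a label, and from the labels of $r_i$ and $c_j$ one recovers $\mathrm{paths}(r_i,c_j)=1+N_{ij}$, hence $N_{ij}$. Consider the $\sqrt{n/L}$ labels of the first-row vertices together with the $\sqrt{n/L}$ labels of the first-column vertices: from these $2\sqrt{n/L}$ labels one can reconstruct the entire matrix $(N_{ij})$ of $n/L$ independent values, each carrying $L$ bits, hence $n$ bits of information in total. Therefore the total length of these $2\sqrt{n/L}$ labels is at least $n$ bits (up to the $O(\sqrt{n/L})$ bits needed to also encode which index each label belongs to, which is lower-order), so some label has size $\Omega(n/\sqrt{n/L})=\Omega(\sqrt{nL})$ bits. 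To make this a genuine lower bound rather than an average-case statement about one fixed graph, I would let the $N_{ij}$ range over all of $\{0,1,\ldots,2^L-1\}^{n/L}$ and use a counting (pigeonhole) argument: if every label had length $<\tfrac12\sqrt{nL}$, the number of distinct label-tuples for the boundary vertices would be less than $2^{n}$, contradicting the fact that distinct choices of $(N_{ij})$ must yield distinguishable tuples.

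The main obstacle I expect is the gadget design: I must ensure simultaneously that (a) the gadget is planar and can be embedded so the whole graph stays planar with the staircase structure intact, (b) inserting the $L$ parallel bypass routes does not create shorter paths or alter the shortest-path \emph{lengths} between any pair $r_i,c_j$ (so all bypasses must have length exactly equal to the segment they replace, and detours through a gadget on the ``wrong'' staircase must remain strictly longer), and (c) the count of shortest $r_i$-to-$c_j$ paths is exactly a known affine function of $N_{ij}$ and is not inflated by paths that wander through other gadgets. Point (b) is the delicate one and is handled by the same weight choice as in~\cite{GPPR04}: the horizontal-then-vertical staircase is the \emph{unique} geodesic shape, so any shortest $r_i$-to-$c_j$ path passes through exactly the one intersection gadget at position $(i,j)$ and nowhere else, which localizes the count and gives $\mathrm{paths}(r_i,c_j)$ depending only on $N_{ij}$. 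Once that is established, the information-theoretic step is routine, and one should double-check that the constant hidden in $\sqrt{n/L}$ versus the constant hidden in the final $\Omega(\sqrt{nL})$ works out, which it does since the two square-root factors are reciprocal. I would also remark, as the authors do for their main scheme, that the construction uses only grid-like planar structure and hence the bound applies to any hereditary family containing grids.
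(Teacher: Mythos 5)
Your overall framework is the right one and matches the paper's: start from the Gavoille et al.\ weighted grid, replace each intersection shortcut by a gadget that hides $\Theta(L)$ bits in the shortest-path count, balance the grid dimension against the gadget size so that the whole graph has $n$ vertices, and then run the standard incompressibility argument on the $\cO(\sqrt{n/L})$ boundary labels. The information-theoretic closing step is also correct as you wrote it.

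The gap is in the gadget itself, and it is fatal to the claimed bound. You propose $L$ \emph{parallel} two-edge bypass routes (plus a default route), with the $\ell$-th bypass present iff the $\ell$-th bit of $N_{ij}$ is $1$, and assert that the number of shortest paths through the gadget is $1+N_{ij}$. But if $k$ of the $L$ bypasses are present, and they are all parallel and of equal length, then the number of shortest paths through the gadget is $1+k$, i.e.\ $1+\mathrm{popcount}(N_{ij})$, not $1+N_{ij}$. This gadget distinguishes only $L+1$ values, so it hides only $\Theta(\log L)$ bits. Repeating your accounting with the correct information content gives a total of $(n/L)\log L$ recoverable bits from $\cO(\sqrt{n/L})$ labels, i.e.\ a label lower bound of $\Omega(\sqrt{n/L}\,\log L)$, which is far weaker than $\Omega(\sqrt{nL})$ (and in fact weaker than the unconditional $\Omega(\sqrt{n})$ already known for distances). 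To hide $\Theta(L)$ bits in a count you need a \emph{multiplicative} amplification, which parallel alternatives of equal length cannot provide.

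The paper's gadget supplies exactly this amplification. Given bits $b_0,\ldots,b_{L-2}$, take a path $s=u_0-u_1-\cdots-u_{L-1}$ and a second path $v_1-v_2-\cdots-v_L=t$ in which \emph{every edge is duplicated} (two parallel edges between consecutive $v$'s), with all gadget edges of equal weight (scaled so the total gadget length matches the Gavoille shortcut length). For each $i$ with $b_i=1$, add the cross edge $u_i-v_{i+1}$. Every $s$-to-$t$ path in the gadget uses exactly one cross edge and has the same number of edges, hence the same length; a path that crosses at $u_i\to v_{i+1}$ then traverses $L-1-i$ duplicated edges and therefore has $2^{L-1-i}$ realizations. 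Summing over present cross edges gives exactly $\sum_{i=0}^{L-2} b_i\,2^{L-1-i}$ shortest paths, an $(L-1)$-bit quantity (and $0$ if all $b_i=0$, in which case only the original non-gadget route survives with count $1$). This gadget is planar -- draw the $u$-path and $v$-path side by side, cross edges between them, duplicated edges nested -- so your concerns (a)--(c) are handled by the same weight discipline you already invoke. With this replacement your counting step then correctly yields $\Omega(\sqrt{nL})$.
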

\begin{proof}
Let us consider a $\sqrt{m} \times \sqrt{m}$ grid graph, weighted as in the proof of Gavoille et al.~from \cite{GPPR04}.
In every intersection, instead of a single $s$-to-$t$ shortcut, we introduce an $\cO(L)$-size gadget -- essentially the one described in the introduction, in our proof that labels of $o(n)$ bits cannot exist if $L$ is unbounded.

More specifically, suppose that we are given $L-1$ bits $b_0,\ldots , b_{L-2}$.
Each edge of the gadget will have weight equal to $1/L$ times the weight of the shortcut in the proof of Gavoille et al.
The gadget consists of a path $s=u_{0}-u_{1}-\cdots - u_{L-1}$ and another path $v_{1}-v_{2}-\cdots-v_{L}=t$ in which every edge is duplicated (i.e., there are two parallel edges between each pair $v_i,v_{i+1}$).
Finally, for every $i=0,\ldots,L-2$ such that $b_{i}=1$, we add an edge $u_{i}-v_{i+1}$.
The number of shortest $s$-to-$t$ paths in the gadget is exactly $\sum_{i=0}^{L-2} b_{i} \cdot 2^{L-1-i}$. Note that this number is congruent to $0$ modulo $2$.
The size of the graph is $n=\Theta(mL)$.

Now, the number of shortest paths from the $i$-th node in the first column to the $j$-th node in the first row is $1$ if all $b_i$'s are equal to $0$ for the gadget at intersection $(i,j)$; otherwise it is equal to $\sum_{i=0}^{L-2} b_{i} \cdot 2^{L-1-i}$.
Hence, each pair $(i,j)$ allows us to recover $\Theta(L)$ distinct bits.
Thus, the labels must consist of $\Omega((\sqrt{m}-1)^2L/(2\sqrt{m}-1))=\Omega(\sqrt{m}L)=\Omega(\sqrt{nL})$ bits.
\end{proof}

We leave the problem of closing the gap between this $\Omega(\sqrt{nL})$ lower bound and the $\cO(\sqrt{n}L)$ upper bound open for further investigation.

\section{A Lower Bound on Dynamic Oracles for Counting Shortest Paths}\label{A Lower Bound on Dynamic Oracles for Counting Shortest Paths}

In this section we consider {\em dynamic oracles} for counting shortest paths (without faults) in undirected planar graphs. That is, data structures that can support queries for counting shortest paths as well as updates to the edge weights. We show a lower bound conditioned on the hardness of Online Boolean Matrix-Vector Multiplication (OMv):

\begin{conjecture}[OMv Conjecture, \cite{OMv}]\label{conjectureOMv}
For every $\epsilon > 0$, there is no $O(N^{3 - \epsilon})$-time algorithm that given an $N \times N$ boolean matrix $M$ and a stream of boolean vectors $v_1, \ldots, v_N$ computes the products $Mv_i$ online (i.e.~computes $Mv_i$ before seeing $v_{i+1}$).
\end{conjecture}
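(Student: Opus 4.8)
This statement is the Online Boolean Matrix--Vector Multiplication (OMv) hardness assumption of~\cite{OMv}; as a \emph{conjecture} rather than a theorem it has no known proof, and an unconditional proof would resolve a central open problem in fine-grained complexity. The plan below is therefore not a derivation of \cref{conjectureOMv} but an outline of the evidence customarily marshalled in its support, together with an explanation of why an unconditional proof is beyond reach with current techniques; the substantive content of this section is the \emph{reduction} in the theorem that follows, which transfers OMv-hardness to dynamic counting oracles on planar graphs.

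First I would record the relevant upper bounds so as to pin down exactly what the conjecture forbids: computing all $N$ products $Mv_i$ naively costs $O(N^3)$, and even ``four Russians''-style combinatorial speedups shave only subpolynomial (at best $\mathrm{polylog}$) factors, whereas the \emph{offline} problem, where all query vectors are available at once, is solvable in $O(N^{\omega+o(1)})$ time by fast matrix multiplication. The conjecture asserts that the online restriction rules out any $O(N^{3-\epsilon})$ algorithm, i.e.\ that one cannot, after preprocessing $M$, answer an adversarially chosen query vector in $O(N^{2-\epsilon})$ amortised time. A ``proof'' in the intended sense would be an adversary or information-theoretic argument showing that no bounded-size state summarising $M$ suffices to answer an adaptively chosen $v_i$ in subquadratic time; one would phrase this as a cell-probe or communication lower bound for the equivalent data-structure formulation (preprocess $M$, then answer vector queries) that~\cite{OMv} show is interreducible with the streaming version.

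Second I would catalogue the consistency evidence: OMv is not known to contradict any standard assumption; no algorithm beating $O(N^{3-\epsilon})$ is known despite substantial effort; it follows from no weaker-looking conjecture yet implies a long list of tight dynamic lower bounds (including the one proved in the next section); and the morally equivalent task of truly subcubic \emph{combinatorial} Boolean matrix multiplication has resisted attack for decades, which is essentially what an online $O(N^{3-\epsilon})$ algorithm would have to beat on each of its $N$ rounds.

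The main obstacle is fundamental and, I expect, insurmountable here: proving \cref{conjectureOMv} unconditionally would entail a near-quadratic-per-query time lower bound for an explicit problem in a general (cell-probe or word-RAM) model, and super-linear lower bounds of this kind are not known for \emph{any} explicit problem --- exactly the barrier that makes fine-grained complexity a theory of \emph{conditional} lower bounds. Consequently the conjecture is used in this paper, as in~\cite{OMv} and its many follow-ups, as an axiom, and what we actually prove is the reduction in the subsequent theorem.
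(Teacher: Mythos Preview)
Your assessment is correct: \cref{conjectureOMv} is stated in the paper as a conjecture, cited from~\cite{OMv}, and is given no proof; it serves solely as the hypothesis for the conditional lower bound in \cref{dynamicLowerBound}. Your additional discussion of supporting evidence and barriers is reasonable context but goes beyond what the paper itself provides, which simply quotes the conjecture and proceeds to the reduction.
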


Based on the above conjecture, we prove that for any dynamic shortest paths counting oracle in undirected planar graphs, either the query time or the update time must be $\Omega(\sqrt{n})$ (up to subpolynomial factors). 

\begin{restatable}{theorem}{dlb} \label{dynamicLowerBound}
A dynamic shortest paths counting oracle in undirected $n$-vertex planar graphs with amortized query time $q(n)$ and update time $u(n)$ cannot have $q(n) + u(n) = O(n^{1/2 - \epsilon})$ for any $\epsilon > 0$ unless the OMv conjecture is false. 
This holds even if we only allow edge-weight increments and decrements by 1.
\end{restatable}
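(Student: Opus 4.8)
The plan is to reduce from OMv using the standard framework (as in the conditional lower bounds of Henzinger et al.) specialized to planar graphs. First I would recall the useful equivalent form of the OMv conjecture: there is no algorithm that, after polynomial-time preprocessing of an $N \times N$ boolean matrix $M$, answers $N$ online vector queries $v_1, \dots, v_N$ in total time $O(N^{3-\epsilon})$. This "OuMv" / amortized-per-query formulation ($O(N^{2-\epsilon})$ amortized time per query is impossible) is what matches a dynamic data structure whose operations we want to bound.

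The core construction is a planar gadget that encodes a boolean matrix $M$ so that one matrix-vector product $Mv$ can be read off from $O(N)$ queries to a shortest-paths \emph{counting} oracle, interleaved with $O(N)$ weight updates. Concretely, I would build a planar graph on roughly $n = \Theta(N^2)$ vertices laid out as a grid-like structure: one "row gadget" path $r_1, \dots, r_N$ and one "column gadget" path $c_1, \dots, c_N$, with an intersection vertex $x_{ij}$ for every pair $(i,j)$ with $M_{ij} = 1$, connected so that there is a short $r_i$-to-$c_j$ path through $x_{ij}$ precisely when $M_{ij}=1$ (this is the same "shortcut at intersection $(i,j)$" idea used in the labeling lower bound of Section~\ref{Labeling Lower Bound for Counting Shortest Paths}, which in turn follows Gavoille et al.). To make the \emph{count} rather than just the distance carry information about $Mv$, I would attach a source $s$ to the column side and a sink $t$ to the row side, and use weights so that all relevant $s$-to-$t$ paths have the same length; the number of shortest $s$-to-$t$ paths then equals $\sum_j v_j \cdot [\,\exists\, \text{shortest } r_i\text{-}c_j \text{ segment}\,] = (Mv)_i$ up to scaling, once we encode the query vector $v$ by toggling, via unit weight increments/decrements, whether column $c_j$ is "active" (on a shortest path) or not. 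Since only $\pm 1$ weight changes are used and the graph is a subdivided grid, planarity and the restriction to unit increments/decrements are both respected.

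With this gadget in hand, the reduction is routine bookkeeping. Given $M$, we build the graph once (preprocessing). For each incoming vector $v_i$ we perform $O(N)$ updates to switch on exactly the columns selected by $v_i$ (resetting afterwards, another $O(N)$ updates), and then $O(N)$ counting queries — one per row — to extract all $N$ entries of $Mv_i$; we read each entry as a residue or exact value of the returned count. Over all $N$ vectors this is $O(N^2)$ updates and $O(N^2)$ queries. If the oracle had amortized update and query time $O(n^{1/2-\epsilon'}) = O(N^{1-2\epsilon'})$, the total work would be $O(N^{2} \cdot N^{1-2\epsilon'}) = O(N^{3 - 2\epsilon'})$, contradicting OMv. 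Translating the exponent: $q(n) + u(n) = O(n^{1/2-\epsilon})$ for planar $n = \Theta(N^2)$ gives $O(N^{1-2\epsilon})$ per operation, so $\epsilon' = \epsilon$ works and the theorem follows.

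The main obstacle, and the step deserving the most care, is designing the gadget so that (a) the graph stays planar \emph{and} has $\Theta(N^2)$ vertices (not more), (b) \emph{all} competing $s$-to-$t$ paths have exactly equal length so that the shortest-path \emph{count} cleanly equals $(Mv)_i$ and no spurious shortest paths inflate the count, and (c) each of the $N$ row-queries reads a distinct entry without the gadgets for different rows interfering — ideally by nesting the row reads so that activating row $i$ for a query is itself an $O(1)$-update operation, or by placing $s,t$ so that a single query naturally isolates one row. I would handle (b) by the same trick as in Section~\ref{Labeling Lower Bound for Counting Shortest Paths}: scale intersection-gadget edge weights by $1/L$-type factors and pad non-shortcut routes with extra unit-weight edges so every route has a common length, using the freedom of real (or sufficiently fine rational) weights; the $\pm1$-only restriction on \emph{updates} is separate from the fixed base weights and is satisfied because toggling a column's activity need only shift one edge weight by $1$. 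A secondary subtlety is ensuring the counts fit in a machine word (or arguing the lower bound holds for the counting-modulo-$p$ variant), which is immediate here since each $(Mv)_i \le N \le \sqrt n$.
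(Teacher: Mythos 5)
Your high-level strategy matches the paper's: reduce from OMv by laying out the matrix as a planar grid, use weight equalization so that the \emph{count} of equal-length shortest paths encodes a dot product, toggle the query vector by $\pm 1$ edge-weight updates, and read off $N$ entries with $N$ queries, at $O(N^2)$ operations per OMv query so that $O(n^{1/2-\epsilon})$ per operation contradicts the conjecture. The exponent bookkeeping at the end is also correct.

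However, there is a genuine gap exactly where you flagged ``the main obstacle'': you do not actually specify a gadget for which (b) and (c) hold simultaneously, and your two fallback suggestions are in tension with the setup you describe. With a \emph{fixed} $s$ and $t$ attached to ``all columns'' and ``all rows,'' a single counting query returns $\sum_i \sum_j v_j M_{ij}$, not $(Mv)_i$ for a specific $i$; you cannot isolate a row without either extra per-query updates or per-row query endpoints, and you don't commit to a construction realizing either. The fix the paper uses is not the $1/L$-scaling trick you import from Section~\ref{Labeling Lower Bound for Counting Shortest Paths} (which solves a different problem — packing many bits into one intersection — and would also clash with the $\pm1$-only update restriction). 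Instead the paper builds \emph{two} grids: $G_{M^t}$ encoding the transpose, and a second ``zero-matrix'' grid $G_0$ with complementary column weights, joined by edges $b_k$ of weight $(\sqrt{n}+1)(\sqrt{n}-k)$. The point of $G_0$ and of the $b_k$ weights is precisely to cancel the $i$- and $j$-dependence of the grid distances, so that every $s_j$-to-$t'_{j+1}$ path through any $b_k$ has the \emph{same} length $2n+2\sqrt n$; only then does the count over $k$ cleanly equal (number of active $b_k$) plus (number of $k$ with $v_k=1$ and $M_{j+1,k}=1$). This length-equalization-by-a-second-grid is the key construction step missing from your proposal; without it, different $(i,j)$ pairs give paths of different lengths and the shortest-path count does not decode to the matrix-vector product. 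Everything else in your plan (planarity, $\pm1$ toggles on the connecting edges only, subtracting the number of $1$'s in $v$, word-size of the counts) is consistent with the paper's proof and would carry over once this gadget is pinned down.
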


\begin{proof}

Our proof follows closely the proof of Abboud-Dahlgaard~\cite{AbboudD16} for dynamic oracles reporting (i.e.~not counting) shortest paths. There are a few subtle differences, but the main difference is that~\cite{AbboudD16} was based on min-plus vector-matrix multiplication while ours is based on standard vector-matrix multiplication. 

\paragraph{Encoding the matrix as a grid.}
We consider a $\sqrt{n} \times \sqrt{n}$ boolean matrix $M$ (i.e.~$N=\sqrt{n}$ in Conjecture~\ref{conjectureOMv}) and encode it using a $(\sqrt{n}+1) \times (\sqrt{n}+1)$ grid $G_M$. For convenience, we index the rows and columns of $M$ as  $1,\ldots,\sqrt{n}$ and the rows and columns of $G_M$ as $0,\ldots,\sqrt{n}$.  The grid $G_M$ contains: 
\begin{enumerate}
\item All horizontal edges of the form $((i,j),(i,j+1))$ except for $((0,j),(0,j+1))$ (i.e.~except for the first grid row). All these edges have the same weight $\sqrt{n}$. 

\item  All vertical edges of the form $((i,j),(i+1,j))$  except for  $((i,\sqrt{n}),(i+1,\sqrt{n}))$ (i.e.~except for the last grid column). The weight of edge $((i,j),(i+1,j))$ is $j+1$. 

\item If $M_{i,j} = 1$ we add an edge $e_{i,j} = ((i-1,j-1),(i,j))$ with weight $\sqrt{n} + j$. 
\end{enumerate}
 
Denote the vertices of the first row $(0,j)$ as $s_j$ and vertices of the last column $(i,\sqrt{n})$ as $t_i$. Consider the shortest $s_j$-to-$t_i$ path. It is easy to see that if $M_{i,j+1} = 0$ then this path is (1) unique, (2) composed of a vertical prefix and and a horizontal suffix, and (3) is of length $\sqrt{n}(\sqrt{n}-j)+i(j+1)$. If however $M_{i,j+1} = 1$ then there are exactly two such shortest paths (one using $e_{i,j}$ and the other using $((i-1,j-1),(i,j-1))$ followed by $((i,j-1),(i,j))$) both of length $\sqrt{n}(\sqrt{n}-j)+i(j+1)$.

\paragraph{The zero matrix grid.}

We would like to make the length of the above shortest paths independent of $i$ and $j$. We  define another $(\sqrt{n}+1) \times (\sqrt{n}+1)$ grid $G_0$ that has no diagonal edges and contains: 
\begin{enumerate}
\item All horizontal edges of the form $((i,j),(i,j+1))$ except for $((0,j),(0,j+1))$ (i.e.~except for the first grid row). All these edges have the same weight $\sqrt{n}$. 

\item  All vertical edges of the form $((i,j),(i+1,j))$ except for  $((i,0),(i+1,0))$ (i.e.~except for the first grid column). The weight of edge $((i,j),(i+1,j))$ is $\sqrt{n}-j+1$. 

\end{enumerate}
 
Denote the vertices of the first column $(i,0)$ of $G_0$ as $s'_i$ and vertices of the first row $(0,j)$ of $G_0$ as $t'_j$. The graph $G$ on which we build the oracle is obtained by connecting the two grids $G_0$ and $G_{M^t}$ (the grid representation of the transpose of $M$). This is done by adding edges $b_i = (t_i,s'_i)$ of weight $w(b_i) = (\sqrt{n}+1)(\sqrt{n}-i)$ for every $1 \le i \le \sqrt n$. 

\paragraph{The reduction.}
In order to solve the OMv problem, for each query vector $v[1,\ldots,\sqrt n]$, if $v$ is the all-zero vector we simply output an all-zero vector. Otherwise, we (1) reset the weight $w(b_i)$ of every $b_i$ to be $(\sqrt{n}+1)(\sqrt{n}-i)$, (2) for every $i$, if $v[i] = 0$ we increase the weight of $b_i$ by 1, and (3) for every index $0 \leq j < \sqrt{n}$ we query the oracle for the number of shortest paths from $s_j$ to $t'_{j+1}$. Finally, we decrease the oracle's answer by the number of 1's in $v$ and assign this value as the $j$'th entry in the result $Mv$.

To see why the above procedure correctly calculates $Mv$, first note that the $j$'th entry in $Mv$ is exactly the number of indices $1 \leq i \leq \sqrt{n}$ s.t.~$e_{i,j}$ is present in $G_{M^t}$ and $v[i] = 1$.
The length of the shortest path from $s_j$ to $t'_{j+1}$ through an edge $b_k$ is $d(s_j,t_k) + w(b_k) + d(s'_k,t'_{j+1}) = 2n + 2\sqrt{n}$. This value is independent of both $j$ and $k$, so for each $1  \leq k \leq \sqrt{n} $ we have a unique shortest path through $b_k$ if $e_{k,j + 1}$ is absent in $G$ or exactly two shortest paths if $e_{k,j + 1}$ is present in $G$. 
In step (2), when we increase by 1 the edges $b_i$ corresponding to entries in $v$ where $v[i] = 0$, paths going through these $b_i$'s are longer than paths going through other $b_i$'s and are therefore not shortest (we made sure that $v$ is not all-zero). Hence, every $b_i$ that corresponds to $v[i] = 1$ contributes $1 + M^t_{i ,j + 1}$ to the number of shortest paths from $s_j$ to $t'_{j+1}$, and by subtracting the number of $1$'s in $v$ we obtain the correct answer. 

Overall, for each vector we perform $O(\sqrt{n})$ updates and queries, so overall we perform $O(n)$ updates and queries. If each update/query takes $O(n^{0.5-\epsilon})$ time then we get overall $O(n^{1.5-\epsilon}) = O(N^{3-\epsilon/2})$ contradicting Conjecture~\ref{conjectureOMv}.
\end{proof}

\bibliographystyle{plainurl}

\end{document}